\pgfplotsset{compat=1.14}
\DeclareMathOperator{\decimal}{decimal}
\newtheorem{theorem}{Theorem}
\newtheorem{remark}{Remark}
\newtheorem{example}{Example}
\newtheorem{lemma}{Lemma}
\DeclareSymbolFont{bbold}{U}{bbold}{m}{n}
\DeclareSymbolFontAlphabet{\mathbbold}{bbold}
\newcommand{\1}{\mathbbold{1}}
\newcommand{\cB}{\mathcal{B}}
\newcommand{\cC}{\mathcal{C}}
\newcommand{\cD}{\mathcal{D}}
\newcommand{\cI}{\mathcal{I}}
\newcommand{\cS}{\mathcal{S}}
\newcommand{\bolda}{\mathbf{a}}
\newcommand{\boldc}{\mathbf{c}}
\newcommand{\boldd}{\mathbf{d}}
\newcommand{\boldm}{\mathbf{m}}
\newcommand{\bolds}{\mathbf{s}}
\newcommand{\boldx}{\mathbf{x}}
\newcommand{\boldy}{\mathbf{y}}
\def\namedlabel#1#2{\begingroup
	\def\@currentlabel{#2}%
	\label{#1}\endgroup
}
\begin{document}
\title{Robust Indexing for the Sliced Channel: Almost Optimal Codes for Substitutions and Deletions}
\author[1]{\textbf{Jin Sima}}
\author[2]{\textbf{Netanel Raviv}}
\author[3]{\textbf{Jehoshua Bruck}}
\affil[1]{\normalsize{Department of Electrical and Computer Engineering, University of Illinois Urbana Champaign
}}
\affil[2]{\normalsize{Department of Computer Science and Engineering, Washington University in Saint Louis
}}
\affil[3]{\normalsize{Department of Electrical Engineering, California Institute of Technology
}}
\maketitle

\begin{abstract}
	 Encoding data as a set of unordered strings is receiving great attention as it captures one of the basic features of DNA storage systems. However, the challenge of constructing optimal redundancy codes for this channel remained elusive. In this paper, we address this problem and present an order-wise optimal construction of codes that are capable of correcting multiple substitution, deletion, and insertion errors for this channel model. The key ingredient in the code construction is a technique we call \textit{robust indexing}: simultaneously assigning indices to unordered strings (hence, creating order) and also embedding information in these indices. 
  The encoded indices are resilient to substitution, deletion, and insertion errors, and therefore, so is the entire code.
\end{abstract}

\thispagestyle{empty}



\section{Introduction}\label{section:introduction}
The interest in storing data in synthetic DNA is drastically increasing lately, due to its advantages of ultra high data density and longevity over other storage media. Tremendous progress has been made in synthesizing and sequencing technologies, which brings about a new era in large-scale DNA storage.  Prototype implementations of DNA storage stored 643KB data in~\cite{Church} and 739KB data in~\cite{Goldman} respectively, followed by many experiments~\cite{antkowiak2020low,blawat16,bornholt2016dna,chandak2019improved,Erlich,Microsoft,Yazdi,Yazdi17} that improved the storage data size. The largest data size achieved in DNA storage is  200MB~\cite{Microsoft}. 

In DNA storage systems, data is represented by strings of four nucleotides that make up the synthesized DNA molecules.  
One of the key features that distinguishes DNA storage from conventional storage media is that data is encoded as an unordered set of short strings, rather than a single long string. This is because current technology cannot synthesize a single DNA string long enough to encode the entire data. The typical length of a short DNA string is several hundreds.

When writing the data, these short strings are synthesized into DNA molecules and stored in a DNA pool. When reading the data, a Polymerase Chain Reaction (PCR) process is used for retrieving the targeted parts of the data. In the PCR process, the number of copies of each targeted DNA molecule is significantly amplified.  Then,
the pool of amplified DNA molecules is sampled and sequenced, producing multiple reads of the short strings that encode the data. In the above reading and writing process, sequencing and synthesis errors can occur, resulting in substitution, deletion and insertion errors in the DNA strings. One way to correct these errors is to cluster the erroneous reads by similarity and use a sequence reconstruction algorithm \cite{Batu} on each cluster to recover the original strings. Yet, such clustering and reconstruction algorithms at the decoder cannot correct writing (synthesis) errors, because the sampled and sequenced pool of DNA molecules are amplified versions of the synthesized ones, which contain writing errors.  
Thus, error-correcting codes for DNA storage come into play.  

While many coding theoretic results have been obtained for various channel models concerning different physical aspects of DNA storage~\cite{Chang, Gabrys1, Kiah, Raviv}, this paper focuses on coding over unordered sets, which captures some basic features in the writing and reading processes described above. Specifically, consider encoding data into~$M$ strings of length~$L$. The decoder wishes to recover the data from erroneous versions of the~$M$ strings, which contain substitution, deletion and insertion errors. This model has been extensively investigated recently. 
The works of~\cite{CodingOverSets,Song,wei2021improved} proposed constructions and upper bounds for coding over an unordered set of strings with sequence loss and symbol substitutions. 
To deal with unordered strings, one of the natural approaches is to assign~$\log M$ bits to each string for indexing so that the strings are ordered. Such index-based construction was considered in~\cite{AnchorbasedCor} and~\cite{ClusterCodes}, which proposed code constructions that correct errors in the indices. It was proved in \cite{Shuffling}, from an information theoretic view, that the channel capacity for communicating over an unordered set of $M$ binary symmetric channels can be achieved by using index-based schemes, under some channel parameter constraints. The capacity results in~\cite{Shuffling} were later extended to more general settings \cite{Achievingcapacity,weinberger2022dna}. 
While index-based schemes are optimal in terms of coding rate, they are sub-optimal in terms of coding redundancy for correcting a small number of errors.   
The work of~\cite{sliced} showed that for a constant number~$K$ of substitution errors, the optimal redundancy for coding over $M$ strings of length $L$ was~$O(K\log ML)$ bits. This is less than the $O(M)$ bits of redundancy required in  index-based schemes \cite{CodingOverSets} for $M$ larger than $L$, which is the common case in DNA storage. 

Though an explicit code with~$O(K^2\log ML)$ bits of redundancy was given in \cite{sliced} for correcting $K$ substitutions over an unordered set of $M$ strings of length $L$, no order-wise optimal and explicit code construction was known. 
In this paper, we close this gap and propose order-wise optimal code constructions that achieve~$O(K\log ML)$ redundancy for~$K$ substitution errors, based on a technique we call \emph{robust indexing}. It is assumed throughout the paper that $M\ge 2$, since the case $M=1$ reduces to ordinary channel coding. Our first main result is as follows
\begin{theorem}\label{theorem:main}
For integers~$M,L$,~$K$, and~$L'\triangleq 3\log M  + 4K^2+2$. If~$L'+4KL'+2K\log (4KL')\le L$, then there exists an explicit $K$-substitution code, computable in~$poly(M,L,K)$ time, that has redundancy $ 2K\log ML + (12K+2)\log M+O(K^3)+O(K\log\log ML)$.
\end{theorem}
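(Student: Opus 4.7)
The strategy is to reduce the set-coding problem to ordinary $K$-substitution coding on a single long string, via a new indexing scheme we call \emph{robust indexing}. I would partition each of the $M$ length-$L$ strings into an index prefix of length $L'$ and a payload suffix of length $L-L'$. The indices will be chosen so that the $M$-element multiset of prefixes forms a $K$-substitution-correcting set-code, meaning that any modification of at most $K$ bits distributed over the multiset of received prefixes can be uniquely decoded back to the original set of indices. Once the indices are restored, they induce a canonical ordering of the $M$ strings, allowing their payloads to be concatenated into a single string of length $M(L-L')\le ML$; an off-the-shelf explicit $K$-substitution code on this long string then contributes the dominant $2K\log(ML)+O(K\log\log ML)$ term of the theorem's redundancy.

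I would build the index system with $L'=3\log M+4K^2+2$ bits per index, enough room for a $\lceil\log M\rceil$-bit positional identifier together with $O(\log M+K^2)$ bits of distance protection against substitutions inside an individual index. To keep the overall redundancy small, I would further store a \emph{sketch} of each index inside its own payload, of length $4KL'+2K\log(4KL')$; this is exactly the budget permitted by the hypothesis $L'+4KL'+2K\log(4KL')\le L$, and leaves $L-L'-4KL'-2K\log(4KL')$ raw data bits per string. The sketch is itself a $K$-substitution-correcting encoding of the index, so that even when the entire prefix of a string is altered by the channel the sketch still reveals the original index, enabling the decoder to pair each received string with its correct position. Summing the per-set index and sketch overheads---dominated by $(4K+1)L'$ bits---and subtracting the information bits that the encoder embeds into the choice of which codewords occupy the $M$ slots yields a residual overhead of $(12K+2)\log M+O(K^3)+O(K\log\log ML)$.

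The main obstacle is that the adversary can distribute the $K$ substitutions arbitrarily across indices, sketches, and payloads, potentially destroying an entire index or defeating the sketch of several strings simultaneously. My plan to overcome this is a three-pass decoder: (i) correct the indices as a multiset using the set-code's distance property, flagging at most $K$ \emph{suspicious} strings whose indices are ambiguous; (ii) for each suspicious string, use the embedded sketch---which is a $K$-substitution-correcting encoding structurally independent of the index bits themselves---to certify the pairing with a concrete position; (iii) arrange the $M$ strings in the order dictated by the verified indices, concatenate the payloads, and run the long-string $K$-substitution decoder on the result. Because the index code, the sketch code, and the payload code each tolerate up to $K$ substitutions, and because any given substitution is charged to exactly one of the three, no double-counting occurs. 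Assembling the three passes into a $poly(M,L,K)$-time algorithm produces the explicit construction claimed, with total redundancy $2K\log ML+(12K+2)\log M+O(K^3)+O(K\log\log ML)$ as stated.
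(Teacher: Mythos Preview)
Your high-level architecture---put information-bearing, mutually far-apart prefixes on the strings, recover the prefixes first, then Reed--Solomon decode the ordered concatenation---matches the paper. But the redundancy accounting in your proposal does not close, and the gap is precisely the step the paper's construction was designed to handle.

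You write that a sketch of length $4KL'+2K\log(4KL')$ is stored ``inside its own payload'' and that this ``leaves $L-L'-4KL'-2K\log(4KL')$ raw data bits per string.'' That is a cost of roughly $4KL'$ bits \emph{in every one of the $M$ strings}, i.e.\ total sketch overhead $\approx 4KML' = \Theta(KM\log M + K^3M)$. This is not recouped by embedding information in the choice of prefixes (that only offsets the $ML'$ index bits), so your final redundancy would be $\Theta(KM\log M)$ rather than the claimed $(12K+2)\log M + O(K^3)$. When you later write that the sketch overhead is ``dominated by $(4K+1)L'$ bits,'' you have silently dropped the factor $M$.

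The paper avoids this by storing a \emph{single global} protector rather than one per string: it forces one prefix to be $\1_{L'}$, and in that distinguished string it places $RS_{2K}(\1(\{\bolda_i\}_{i=1}^M))$, the Reed--Solomon redundancy for the length-$2^{L'}$ indicator vector of the prefix set. This costs $4KL'$ bits once, not $M$ times. The decoder first locates the string whose prefix is within Hamming distance $K$ of $\1_{L'}$, reads the (possibly corrupted) RS redundancy from it, and corrects the indicator vector; from the indicator vector it recovers the entire prefix set and then matches each received prefix to its true value by minimum distance. Your per-string sketch idea is not needed, and it is too expensive.

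A secondary issue: your error-charging argument (``any given substitution is charged to exactly one of the three'') is too coarse. A substitution in a prefix changes one string's prefix but can flip \emph{two} bits of the indicator vector (one $1\to 0$ and one $0\to 1$); if $K'$ errors land in prefixes, the indicator vector suffers up to $2K'$ errors while the stored RS redundancy may suffer the remaining $K-K'$, for a total of up to $2K$. This is why the paper uses $RS_{2K}$, not $RS_K$, on the indicator vector. Your proposal does not account for this doubling.
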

Roughly speaking, the proof of Theorem \ref{theorem:main} relies on the following---Instead of directly assigning an index to each short string, as is done in index based coding schemes,  we \emph{embed information into indices}. Note that to combat errors, the indexing bits themselves constitute an error correcting codebook  and information is carried through choices of the codebook. The idea of encoding information through codebook choices also appeared in \cite{AnchorbasedCor}. The difference between our construction and the one in \cite{AnchorbasedCor} is that the construction in \cite{AnchorbasedCor} is index-based, which inherently requires redundancy which is at least linear in $M$, while our construction uses the data itself for indexing. In addition, 
our \emph{robust indexing} algorithm generates indexing bits in a greedy manner and is computationally efficient. Our algorithm also applies to deletion/insertion errors by considering the deletion/insertion distance metric, instead of the Hamming distance metric. By  using~$K$-deletion correcting  codes~\cite{kdeletion} for a single string, we propose a code that corrects~$K$ deletions with~$O(K\log ML)$ redundancy, which is our second main result.
\begin{theorem}
For integers~$M,L,K$, and~$L'\triangleq 3\log M  + 4K^2+2$. If~$L'+4KL'+2K\log (4KL')\le L$, then there exists a $K$-deletion code, encodable and decodable  in $O(N^{2K+1})$ time, that has redundancy $ 4K\log ML + (12K+2)\log M+O(K^3)+o(\log ML)$. 
\end{theorem}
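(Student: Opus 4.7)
The strategy is to mimic the proof of Theorem~\ref{theorem:main} with the Hamming metric replaced throughout by the Levenshtein (deletion) distance, and then to pay a per-string price for encoding each short string with a known explicit single-string $K$-deletion correcting code.

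First, I would re-run the robust indexing construction with deletion distance in place of Hamming distance. Concretely, the greedy index-assignment step must produce a map from the $M$ unordered strings to length-$L'$ prefixes so that (i) the assignment is injective, (ii) after up to $K$ deletions anywhere in a codeword the index portion can still be located and identified---i.e.\ the chosen indices form a $K$-deletion-separated family, pairwise at Levenshtein distance $>2K$---and (iii) the indices themselves carry user information through the choice of codebook, exactly as in the substitution case. Because a deletion shifts every subsequent position, the separating buffers between the index bits, the payload bits, and the single-string error-correcting redundancy need to be enlarged; this is precisely what the hypothesis $L'+4KL'+2K\log(4KL')\le L$ is designed to accommodate.

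Second, I would encode each short-string body with the explicit $K$-deletion code of~\cite{kdeletion}, which adds $O(K\log L)$ redundancy per string and runs in $O(L^{2K+1})$ time per string. These single-string codes guarantee that, once the decoder has reconstructed and sorted the noisy reads by their recovered indices, each body can be recovered individually. Aggregating the pieces, the global redundancy splits into the index-family contribution of roughly $4K\log ML$ bits (twice the substitution coefficient of Theorem~\ref{theorem:main}, since enforcing Levenshtein separation $>2K$ on a shortened index costs essentially twice what enforcing Hamming separation $>2K$ costs), plus the $(12K+2)\log M + O(K^3)$ bookkeeping terms inherited verbatim from Theorem~\ref{theorem:main}, plus the $o(\log ML)$ contribution from the $M$ uses of the single-string deletion code. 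The overall encoding and decoding time is dominated by the single-string deletion decoder applied to strings of length $O(L)$ inside a codeword of total length $N=ML$, giving $O(N^{2K+1})$.

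The main obstacle is the first step. In the substitution setting the greedy index construction exploits that Hamming errors are local and position-preserving, whereas deletions cause synchronization drift between the index portion and the body portion of each short string, and between consecutive index bits themselves. Proving that the greedy procedure still yields an injective, $K$-deletion-separated family of indices---\emph{and} that these indices can actually be located and extracted from the noisy reads despite that drift---is the technical crux. Quantitatively, one must bound, at each greedy step, the Levenshtein-ball sizes around previously-chosen indices and show that a fresh deletion-robust prefix of length $L'$ is always available; propagating this bound through the recursion, and choosing the buffer lengths exactly as dictated by the hypothesis on $L$, is where most of the work lies.
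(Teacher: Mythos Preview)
Your robust-indexing step is broadly on target: replacing Hamming separation by deletion separation and showing that a greedy, bit-by-bit construction still produces $M$ indices with pairwise disjoint $K$-deletion balls is exactly what the paper does (via the map~$F^D_S$ and the counting function~$N_D(\bolda,A)$). So the part you flag as the ``main obstacle'' is real, and the paper resolves it along the lines you sketch.

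The genuine gap is in your second step. You propose to protect \emph{each short string individually} with the $K$-deletion code of~\cite{kdeletion}, and then claim that ``the $M$ uses of the single-string deletion code'' contribute only $o(\log ML)$ redundancy. That accounting is wrong: each use of~\cite{kdeletion} on a length-$L$ string costs $4K\log L + o(\log L)$ bits, so $M$ independent uses cost $\Theta(MK\log L)$ bits, which is far larger than the target $4K\log ML$ and would destroy the theorem. The paper avoids this by applying the $K$-deletion code \emph{once}, to the entire concatenation of the ordered bodies. The point is that a $K$-deletion error on the codeword induces at most $K$ deletions \emph{total} across the concatenated string; once robust indexing has recovered the correct ordering, the concatenated bodies form a subsequence of a single length-$\Theta(ML)$ deletion codeword with at most $K$ deletions, so one call to $Dec$ suffices. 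This is also where the leading constant $4K$ actually comes from---it is the $4K\log n$ redundancy of~\cite{kdeletion} with $n\approx ML$---not, as you suggest, from Levenshtein separation on the indices being ``twice as expensive'' as Hamming separation.

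A smaller omission: you do not say how the decoder recovers the index set itself under deletions. The paper does this by protecting the indicator vector $\1(\{\bolda_i\})$ with Reed--Solomon redundancy (as in the substitution case), and then protecting that short RS redundancy block with a second, local application of the $K$-deletion code; the anchor string $\bolda_1=\1_{L'}$ lets the decoder locate this block after deletions. Without this two-layer protection your decoder cannot reliably extract the indices.
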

\begin{remark}
The encoding/decoding complexity $O(N^{2K+1})$     comes from the complexity of the codes in \cite{kdeletion}. The robust indexing algorithm, which reduces the problem of coding over unordered strings to that of coding over a single string, has complexity $poly(M,L,K)$.
\end{remark}
Other related problem settings include: permutation channels~\cite{Kovacevic,perm2,makur2020coding,tang2023capacity,perm1}, which consider string errors only, and torn paper coding \cite{bar2022adversarial,shomorony2021torn}. See \cite{shomorony2022information} for a broader survey of the related problems.

The rest of the paper is organized as follows. Section~\ref{section:preliminary} presents the notations and the channel model. In Section~\ref{section:robusti} we provide an order-wise optimal code construction for substitution errors.  The~\emph{robust indexing} algorithm is given in Section~\ref{section:computingfh}. In Section~\ref{section:computingfd} we apply \emph{robust indexing} to deletion errors and propose a deletion correcting code construction.

\section{Preliminaries}\label{section:preliminary}
We focus on the binary alphabet~$\{0,1\}$. For a set~$S$ and an integer~$m$,
denote by~$\binom{S}{m}$ the family of all subsets of~$m$ different elements in~$S$, and by~$\binom{S}{\le m}=\bigcup^{M}_{i=1}\binom{S}{m}$ the family of all subsets of~$S$ with no more than~$m$ elements. 
For an integer~$\ell$, let~$\{0,1\}^{\le \ell}$ be the set of all binary strings with length at most~$\ell$.
 In our channel model, it is assumed that the data is given as a binary string and encoded in an unordered set of~$M$ different strings~$\{\boldx_i\}^M_{i=1}$ of length~$L$. In this paper, a set $\{\boldx_i\}^M_{i=1}\in\binom{\{0,1\}^L}{M}$  is referred to as a \textit{word}, and each element~$\boldx_i$ in a word is referred to as a \textit{string}. Note that in our settings,  a code is a set of words, and a \textit{codeword} is a word in the code, rather than a string as in classic coding theoretic settings.

 The assumption that the strings~$\boldx_i$,~$i\in[M]$, in a word are different stems from the fact that the sampling and  sequencing procedures in the reading process cannot detect repeated strings in the word~$\{\boldx_i\}^M_{i=1}$.  Moreover, it follows from the definition of code redundancy that will be presented later, that the asymptotic redundancy of a code is not affected by allowing repeated strings in the codeword, when~$M=o(2^L)$. 
 
In the considered channel, a word~$\{\boldx_i\}^M_{i=1}$ is subject to substitution, deletion and insertion errors . In this paper, we propose codes for correcting substitution errors and deletion errors separately. The presented deletion code is capable of correcting deletion/insertion errors as well. A~$K$-substitution error is an operation that flips at most~$K$ bits in the    word. Each bit flip can occur in any of the strings~$\boldx_i$,~$i\in[M]$, where~$[M]\triangleq\{1,\ldots,M\}$. For any word~$\{\boldx_i\}^M_{i=1}\in\binom{\{0,1\}^L}{M}$, define its Hamming ball~$\cB^H_K(\{\boldx_i\}^M_{i=1})\in\binom{\{0,1\}^L}{\le M}$ as the set of all possible outcomes of a~$K$-substitution error in~$\{\boldx_i\}^M_{i=1}$. Note that a word can have less than $M$ strings after a $K$-substitution error, if two strings in the word become identical after substitution errors.  A~$K$-substitution code~$\cC^H$ is an ensemble of words~$\{\boldx_i\}^M_{i=1}\in\binom{\{0,1\}^L}{M}$ such that for any two words~$S_1,S_2\in\cC^H$, we have that~$\cB^H_K(S_1)\cap\cB^H_K(S_2)=\emptyset$.
Similarly, a~$K$-deletion error is an operation that deletes at most~$K$ bits in a word. Each deletion can occur at any of the strings $\boldx_i$, $i\in[M]$. 
For a set $\{\boldx_i\}^M_{i=1}\in\binom{\{0,1\}^L}{M}$, its deletion ball~$\cB^D_K(\{\boldx_i\}^M_{i=1})\subseteq\binom{\{0,1\}^{\le L}}{\le M}$ is the collection of outcomes of a~$K$-deletion error in~$\{\boldx_i\}^M_{i=1}$. A~$K$-deletion code~$\cC^D$ is a collection  of words~$\{\boldx_i\}^M_{i=1}\in\binom{\{0,1\}^L}{M}$ such that the deletion balls of any two words~$S_1,S_2\in\cC^D$ do not intersect. The redundancy of a $K$-substitution or $K$-deletion code~$\cC$ is defined as~$r(\cC)=\log \binom{2^L}{M}-\log |\cC|$.
\begin{example}
    Let $\{001,101\}$ be a word in $\binom{\{0,1\}^3}{2}$ and let $K=1$. Then its Hamming ball $\cB^H_K(\{001,101\})$ is given by
    \begin{align*}
        \cB^H_K(\{001,101\})=\Big\{\{001,101\},\{101\},\{011,101\},\{000,101\},\{001\},\{001,111\},\{001,100\}\Big\}.
    \end{align*}
    Its deletion ball $\cB^D_K(\{001,101\})$ is given by 
    \begin{align*}
        \cB^D_K(\{001,101\})=\Big\{\{001,101\},\{01,101\},\{00,101\},\{001,01\},\{001,11\},\{001,10\}\Big\}.
    \end{align*}    
\end{example}
Our code constructions make use of Reed-Solomon (RS) code~\cite{Ronny}. To this end, let $RS_k:\{0,1\}^n\to\{0,1\}^{2k\log n+o(\log n)}$ be a function which returns the redundancy of the input word, when encoded using a systematic Reed-Solomon code of minimum distance $2k+1$, i.e., such that $(\boldm,RS_k(\boldm))$ is a codeword in an RS code of minimum distance $2k+1$ for any $\boldm\in\{0,1\}^n$. In detail, the function chooses the smallest $q$ such that $q\ge \frac{n}{\lceil\log_2 q\rceil}-1$, splits the input into blocks of length $\lceil\log_2 q\rceil$, considers each block as an element in $\mathbb{F}_q$,  applies standard systematic RS encoding, and outputs only the redundancy bits. 

In addition, combinatorial numbering maps~\cite{Knuth} are used in the robust indexing algorithm. Specifically, for integers~$m$ and~$n$, there exist a  map~$F_{com}:[\binom{n}{m}]\rightarrow \binom{[n]}{m}$ that maps an integer~$d\in[\binom{n}{m}]$ to a set of~$m$ different elements in~$[n]$, and a map~$F_{perm}:[n!]\rightarrow S_n$ that maps an integer~$d\in[n!]$ into a permutation on~$n$ elements. 
Both $F_{com}$ and $F_{perm}$ can be computed in $poly(m,n)$ time.

\section{Robust indexing for substitution errors}\label{section:robusti}
In this section we describe our code constructions, based on the idea of robust indexing. 
These codes correct substitution errors, and have redundancy~$O(K\log ML)$, which is order-wise optimal whenever~$K$ is at most $O(\min\{L^{1/3},L/\log M\})$. The code presented in this section proves Theorem \ref{theorem:main}.

Since the codewords consist of unordered strings, we assign indexing bits to each string such that the strings in each codeword are ordered. However, instead of directly assigning the indices~$1,\ldots,M$ to each string using $\log (M)$ bits, we embed information into the indexing bits. In other words, we use the information bits themselves for the purpose of indexing. 


Specifically, for a codeword~$W=\{\boldx_i\}^M_{i=1}\in\binom{\{0,1\}^L}{M}$,
we choose the first~$L'$ bits~$(x_{i,1},x_{i,2},\ldots,x_{i,L'})\in\{0,1\}^{L'}$,~$i\in[M]$ in each string~$\boldx_i$ as the indexing bits, for some $L'$ that is defined shortly, and encode information in them. Then, the strings in~$\{\boldx_i\}^M_{i=1}$ are sorted according to the lexicographic order~$\pi$ of the indexing bits $(x_{i,1},x_{i,2},\ldots,x_{i,L'})$,~$i\in[M]$, where\footnote{For two binary strings $\boldx=(x_1,\ldots,x_m)$ and $\boldy=(y_1,\ldots,y_m)$,  we say that $\boldx>\boldy$ if there exists an index $i$ such that $x_j=y_j$ for $j\in[i-1]$ and $x_i>y_i$.}~$(x_{\pi (i),1},x_{\pi(i),2},\ldots,x_{\pi(i),L'})<(x_{\pi (j),1},x_{\pi(j),2},\ldots,x_{\pi(j),L'})$ for~$i<j$. 
Once the strings in~$\{\boldx_i\}^M_{i=1}$ are ordered, it suffices to use a Reed-Solomon code to protect the concatenated string~$(\boldx_{\pi(1)},\ldots,\boldx_{\pi(M)})$, and thus the codeword~$\{\boldx_i\}^M_{i=1}$, from~$K$ substitution errors.

One key issue with this approach is that the indexing bits and their lexicographic order can be disrupted by substitution errors 
, in cases when: (a) Two erroneous strings $(x'_{i,1},x'_{i,2},\ldots,x'_{i,L'})$ and $(x'_{j,1},x'_{j,2},\ldots,x'_{j,L'})$ of the strings $(x_{i,1},x_{i,2},\ldots,x_{i,L'})$ and $(x_{j,1},x_{j,2},\ldots,x_{j,L'})$, respectively, might be identical, i.e., $(x'_{i,1},x'_{i,2},\ldots,x'_{i,L'})=(x'_{j,1},x'_{j,2},\ldots,x'_{j,L'})$, or alternatively might switch values, i.e., $(x'_{i,1},x'_{i,2},\ldots,x'_{i,L'})=(x_{j,1},x_{j,2},\ldots,x_{j,L'})$ and $(x'_{j,1},x'_{j,2},\ldots,x'_{j,L'})=(x_{i,1},x_{i,2},\ldots,x_{i,L'})$, both cases preclude RS decoding. (b) The string $(x_{j,1},x_{j,2},\ldots,x_{j,L'})$ may have a different value as a result of substitution errors, causing a change in its lexicographic order. Further, the decoder might not know what the correct decoding is, as there are multiple ways of correcting the erroneous strings which result in a set of M indices whose pairwise minimum distance is at least $2K+1$.

To deal with such cases, we present
a technique we call \emph{robust indexing}, which protects the indexing bits from substitution errors.  
The basic ideas of~\emph{robust indexing} are as follows: 

(1) Constructing the indexing bits~$\{(x_{i,1},x_{i,2},\ldots,x_{i,L'})\}^M_{i=1}$ such that the Hamming distance between any two distinct $(x_{i,1},x_{i,2},\ldots,x_{i,L'})$ and~$(x_{j,1},x_{j,2},\ldots,x_{j,L'})$ is at least ~$2K+1$, i.e., the strings~$\{(x_{i,1},x_{i,2},\ldots,x_{i,L'})\}^M_{i=1}$ constitute an error correcting code in classical settings where each codeword is a string. Note that by adding the distance $2K+1$ constraint on strings~$\{(x_{i,1},x_{i,2},\ldots,x_{i,L'})\}^M_{i=1}$, we avoid case (a) mentioned above. Moreover, we can  compute a mapping between the strings in~$\{(x_{i,1},x_{i,2},\ldots,x_{i,L'})\}^M_{i=1}$ and the strings in the erroneous version~$\{(x'_{i,1},x'_{i,2},\ldots,x'_{i,L'})\}^M_{i=1}$ by using a minimum Hamming distance criterion. The mapping correctly identifies a string $(x_{i,1},x_{i,2},\ldots,x_{i,L'})$ from its erroneous version~$(x'_{i,1},x'_{i,2},\ldots,x'_{i,L'})$ for $i\in[M]$.

(2) Using additional redundancy to protect the set of indexing bits~$\{(x_{i,1},x_{i,2},\ldots,x_{i,L'})\}^M_{i=1}$ from substitution errors. More specifically, we protect the indicator vector, which is a vector representation of the set $\{(x_{i,1},x_{i,2},\ldots,x_{i,L'})\}^M_{i=1}$ and is defined later, by using Reed-Solomon code. 
Since the string $(x_{i,1},x_{i,2},\ldots,x_{i,L'})$ can be identified from its erroneous version $(x'_{i,1},x'_{i,2},\ldots,x'_{i,L'})$ as described in (1), 
we avoid case (b) mentioned above after recovering the values of the strings  $\{(x_{i,1},x_{i,2},\ldots,x_{i,L'})\}^M_{i=1}$.

Note that in the robust indexing technique, we encode part of the data in the code~$\{(x_{i,1},x_{i,2},\ldots,x_{i,L'})\}^M_{i=1}$ through different choices of the code. Namely, in the first $L'$ bits, the space of all possible words is the set of all (ordinary) codes of length $L'$ and minimum distance at least $2K+1$, and encoding is an injective map from part of the information bits to this space. 


In the following, we present more details about the ideas in (1) and then in (2). We start with (1) and give a definition of the space of codes of length $L'$ and minimum distance at least $2K+1$ that we are interested in. The redundancy introduced by mapping information into this space and the algorithm for computing the mapping are discussed in the sequel. 
For an integer~$\ell$, let~$\1_\ell$ be the all~$1$'s vector of length~$\ell$.
Define~$\mathcal{S}^H$ as the set of all length~$L'$ codes that have cardinality~$M$ and minimum Hamming distance at least~$2K+1$ and  contain~$\1_{L'}$ (including the all 1's vector is a technical requirement, that will be made clear in the sequel.), that is,
\begin{align*}
    \cS^H\triangleq\left\{\{\bolda_1,\ldots,\bolda_M\}:
    \bolda_1= \1_{L'},\mbox{ $\bolda_i\in\{0,1\}^{L'}$, and }
d_H(\bolda_i,\bolda_j)\ge 2K+1\mbox{ for every distinct }i,j\in[M]\right\}.
\end{align*}
The following lemma gives a lower bound on the size of~$\cS^H$ and is obtained using a counting argument.
\begin{lemma}\label{lemma:lowerboundS}
	Let $Q=\sum^{2K}_{i=0}\binom{L'}{i}$ be the size of a Hamming ball of radius~$2K$ in~$\{0,1\}^{L'}$. We have that
	\begin{align}\label{equation:numberofcodes}
	|\mathcal{S}^H|\ge \frac{(2^{L'}-MQ)^{M-1}}{(M-1)!},
	\end{align}
\end{lemma}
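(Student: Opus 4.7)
The plan is to give a Gilbert--Varshamov style sequential counting argument. Since every element of $\mathcal{S}^H$ is forced to contain $\1_{L'}$, I will fix $\bolda_1 = \1_{L'}$ throughout and count the number of valid ordered tuples $(\bolda_2,\ldots,\bolda_M) \in (\{0,1\}^{L'})^{M-1}$ such that $\{\bolda_1,\ldots,\bolda_M\}$ has minimum pairwise Hamming distance at least $2K+1$. Every set in $\mathcal{S}^H$ corresponds to exactly $(M-1)!$ such ordered tuples (the orderings of the $M-1$ elements other than the fixed $\1_{L'}$), so at the end I divide by $(M-1)!$.

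Next I would bound the number of valid tuples greedily. Having already chosen $i$ codewords $\bolda_1,\ldots,\bolda_i$ with pairwise Hamming distance at least $2K+1$, a string $\boldx \in \{0,1\}^{L'}$ is an admissible choice for $\bolda_{i+1}$ iff $d_H(\boldx,\bolda_j) \ge 2K+1$ for all $j \in [i]$, i.e., iff $\boldx$ lies outside the union of the Hamming balls of radius $2K$ around $\bolda_1,\ldots,\bolda_i$. By the union bound this forbidden region has size at most $i \cdot Q$, so the number of admissible choices for $\bolda_{i+1}$ is at least $2^{L'} - iQ \ge 2^{L'} - MQ$, where we may assume $2^{L'} - MQ > 0$ (otherwise the bound in the lemma is trivial).

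Multiplying over $i = 1, \ldots, M-1$ gives at least $\prod_{i=1}^{M-1}(2^{L'} - iQ) \ge (2^{L'} - MQ)^{M-1}$ valid ordered tuples extending $\bolda_1 = \1_{L'}$. Dividing by $(M-1)!$ to pass from ordered tuples to sets yields
\[
|\mathcal{S}^H| \;\ge\; \frac{(2^{L'}-MQ)^{M-1}}{(M-1)!},
\]
as claimed. The argument is essentially bookkeeping; the only mild subtlety is keeping the role of the fixed codeword $\1_{L'}$ straight so that the division by $(M-1)!$ (and not $M!$) is the correct normalization, and handling the degenerate regime $2^{L'} \le MQ$ where the stated bound is vacuous.
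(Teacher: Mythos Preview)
Your proposal is correct and matches the paper's own proof essentially step for step: fix $\bolda_1=\1_{L'}$, greedily count at least $2^{L'}-(i-1)Q$ choices for each subsequent $\bolda_i$, multiply to get $\prod_{i=2}^{M}(2^{L'}-(i-1)Q)\ge (2^{L'}-MQ)^{M-1}$ ordered tuples, and divide by $(M-1)!$. Your added remark about the degenerate regime $2^{L'}\le MQ$ is a nice touch that the paper leaves implicit.
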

\begin{IEEEproof}
Define the set of \textit{ordered} tuples $$\cS^H_T=\left\{(\bolda_1,\ldots,\bolda_{M}):\bolda_1= \1_{L'},\mbox{ $\bolda_i\in\{0,1\}^{L'}$, and }d_H(\bolda_i,\bolda_j)\ge 2K+1 \mbox{ for distinct }i,j\in[M]\right\}$$ such that for each tuple~$(\bolda_1,\ldots,\bolda_{M})\in\cS^H_T$, we have that~$\{\bolda_1,\ldots,\bolda_{M}\}\in\cS^H$.
We show that~$|\cS^H_T|\ge \prod^{M}_{i=2}[2^{L'}-(i-1)Q]$, by finding~$\prod^{M}_{i=2}[2^{L'}-(i-1)Q]$ distinct tuples in~$\cS^H_T$. Let~$\bolda_1=\1_{L'}$. We select~$\bolda_2,\ldots,\bolda_{M}$ sequentially such that each selected string~$\bolda_i$,~$i\in[2,M]\triangleq\{2,\ldots,M\}$, is of Hamming distance at least~$2K+1$ from each one of~$\bolda_{1},\ldots,\bolda_{i-1}$. The tuple~$(\bolda_1,\ldots,\bolda_M)$ selected in this way has pairwise Hamming distance at least~$2K+1$ and thus belongs to~$\cS^H_T$. 

Since the number of strings having Hamming distance at most~$2K$ from at least one of~$\bolda_1,\ldots,\bolda_{i-1}$ is at most~$(i-1)Q$, there are at least~$2^{L'}-(i-1)Q$ possible choices of~$\bolda_i$ that have Hamming distance at least~$2K+1$ from each one of~$\bolda_{1},\ldots,\bolda_{i-1}$. Therefore, the total number of ways for selecting tuples~$(\bolda_1,\ldots,\bolda_{M})$ is at least~$\prod^{M}_{i=2}[2^{L'}-(i-1)Q]$.
Since in the above selection of tuples~$(\bolda_1=\1_{L'},\ldots,\bolda_M)\in\cS^H_T$, there are~$(M-1)!$ tuples that correspond to the same set~$\{\bolda_1=\1_{L'},\bolda_2,\ldots,\bolda_{M}\}$ in~$\mathcal{S}^H$, we have that
\begin{align*}
	|\cS^H|=|\cS^H_T|/(M-1)!\ge \prod^{M}_{i=2}[2^{L'}-(i-1)Q]/(M-1)!\ge \frac{(2^{L'}-MQ)^{M-1}}{(M-1)!}.\qedhere
\end{align*}
\end{IEEEproof}
According to~\eqref{equation:numberofcodes}, there exists an invertible mapping~$F^H_S:[\lceil\frac{(2^{L'}-MQ)^{M-1}}{(M-1)!}\rceil]\rightarrow \binom{\{0,1\}^{L'}}{M}$, computed in~$O(2^{ML'})$ time using brute force, that maps an integer~$d\in\left[\lceil\frac{(2^{L'}-MQ)^{M-1}}{(M-1)!}\rceil\right]$ 
to a code~$F^H_S(d)\in \mathcal{S}^H$. In the next section, we will present a polynomial time algorithm that computes a map~$F^H_S(d)$ for any $d\in\left[\binom{2^{L'}-(M-1)Q+M-1}{M-1}\right]$ such that $F^H_S(d)\in \cS^H$ and $F^H_S(d_1)\ne F^H_S(d_2)$ for any distinct $d_1,d_2\in \left[\binom{2^{L'}-(M-1)Q+M-1}{M-1}\right] $. Note that $\binom{2^{L'}-(M-1)Q+M-1}{M-1}\ge \lceil\frac{(2^{L'}-MQ)^{M-1}}{(M-1)!}\rceil$. Let us assume for now that the mapping~$F^H_S$ and its inverse $(F^H_S)^{-1}$ are given. 

In the following, we provide more intuition regarding item (2) described above. We first give the definition of an indicator vector mentioned above and show that a $K$-substitution error results in at most $2K$ subsititution errors in the indicator vector, which can be corrected using classic Reed-Solomon codes. Note that the set of strings $\{(x_{i,1},x_{i,2},\ldots,x_{i,L'})\}^M_{i=1}$ can be determined uniquely by its indicator vector. 

For a set~$S\in \binom{\{0,1\}^{L'}}{\le M}$, define the indicator vector~$\1(S)\in\{0,1\}^{2^{L'}}$ of~$S$ by
\begin{align*}
    		\1 (S)_i=\begin{cases}
			1 & \mbox{if the binary presentation of~$i-1$ is in~$S$}\\
			0 &\mbox{else} 
		\end{cases}
\end{align*}
for $i\in[2^{L'}]$. 
Notice that the Hamming weight of~$\1(S)$ is~$M$ for every~$S\in\binom{\{0,1\}^{L'}}{ M}$, 
 and the following simple lemma holds.
\begin{lemma}\label{lemma:distance1S}
	For~$S_1,S_2\in \binom{\{0,1\}^{L'}}{\le M}$, if~$S_1\in\cB^H_K(S_2)$, then~$d_H(\1(S_1),\1(S_2))\le 2K$, where~$d_H(\1(S_1),\1(S_2))$ is the Hamming distance between~$\1(S_1)$ and~$\1(S_2)$
\end{lemma}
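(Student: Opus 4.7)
The plan is to track, one string at a time, which coordinates of the indicator vector can possibly change when we move from $S_2$ to $S_1$. Since a $K$-substitution error consists of at most $K$ bit flips distributed among the $M$ strings of $S_2$, I would let $k\le K$ be the number of strings in $S_2$ that receive at least one bit flip, and denote those strings $\boldx_{i_1},\dots,\boldx_{i_k}$ with their post-error versions $\boldx'_{i_1},\dots,\boldx'_{i_k}$. The remaining $M-k$ strings of $S_2$ are unchanged, and so they appear in $S_1$ as well.

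Next I would argue that $\1(S_1)$ and $\1(S_2)$ can differ only at the $2^{L'}$-coordinates indexed by the strings $\boldx_{i_1},\dots,\boldx_{i_k}$ and by $\boldx'_{i_1},\dots,\boldx'_{i_k}$. Indeed, pick any position $p\in[2^{L'}]$ whose binary representation is some string $\boldz\notin\{\boldx_{i_1},\dots,\boldx_{i_k}\}\cup\{\boldx'_{i_1},\dots,\boldx'_{i_k}\}$. Then $\boldz\in S_2$ if and only if $\boldz$ is one of the $M-k$ unchanged strings, and the same equivalence holds for $S_1$ because none of the altered strings contributes $\boldz$ before or after the error. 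Hence $\1(S_2)_p=\1(S_1)_p$.

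The bound then follows by counting: the disagreement set is contained in a union of at most $k$ positions (one for each $\boldx_{i_j}$) and at most $k$ further positions (one for each $\boldx'_{i_j}$), giving $d_H(\1(S_1),\1(S_2))\le 2k\le 2K$. I do not foresee a real obstacle; the only point to handle carefully is that strings may collapse, for instance when $\boldx'_{i_j}$ happens to equal some other original $\boldx_{i_{j'}}$ or when two different altered strings coincide. In those collapse cases the actual number of flipped indicator coordinates is only smaller, so the inequality $d_H(\1(S_1),\1(S_2))\le 2K$ still holds and the proof is complete.
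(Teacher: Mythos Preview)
Your proof is correct and follows essentially the same idea as the paper's. The paper's version is just a more compressed phrasing of the same observation: it notes that $d_H(\1(S_1),\1(S_2))=|(S_1\setminus S_2)\cup(S_2\setminus S_1)|$ and that each of $|S_1\setminus S_2|$ and $|S_2\setminus S_1|$ is at most $K$, which is exactly what your tracking of the $k\le K$ altered strings and their images establishes.
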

\begin{proof}
	Note that~$|S_1\backslash S_2|\le K$ and~$|S_2\backslash S_1|\le K$, where $S_1\backslash S_2=\{\boldx:\boldx\in S_1,\boldx\notin S_2\}$.
	Hence,~$d_H(\1(S_1),\1(S_2))=|(S_1\backslash S_2)\cup( S_2\backslash S_1)| \le 2K$.
\end{proof}
We now turn to present the code construction.
We use a set~$S\in\cS^H$ as indexing bits and protect its indicator vector~$\1(S)$ from substitution errors. Note that any two strings in the set~$S$ have Hamming distance at least~$2K+1$. Hence, if the set~$S$ is known, each string of indexing bits $(x_{i,1},\ldots,x_{i,L'})$ can be extracted from its erroneous version $(x'_{i,1},\ldots,x'_{i,L'})$ using a minimum distance decoder, which finds the unique string in~$S$ that is within Hamming distance~$K$ from $(x'_{i,1},\ldots,x'_{i,L'})$.  
The details are given as follows.

Let the message to be encoded be represented as a tuple~$\boldd=(d_1,\boldd_2)$, where~$d_1\in [\binom{2^{L'}-(M-1)Q+M-1}{M-1}]$ and 
$$\boldd_2\in \{0,1\}^{M(L-L')-4KL'-2K\lceil\log ML\rceil}.$$
Given~$(d_1,\boldd_2)$, the codeword~$\{\boldx_i\}^M_{i=1}$ is generated by the following procedure.

\textbf{Encoding:} 
\begin{itemize}
    \item [\textbf{(1)}] Let~$F^H_S(d_1)=\{\bolda_1,\ldots,\bolda_M\}\in\mathcal{S}^H$ such that~$\bolda_1=\1_{L'}$ and the $\bolda_i$'s are sorted in a descending lexicographic order, i.e., 
    $\bolda_i>\bolda_j$  for~$i<j$. 
    Let $(x_{i,1},\ldots,x_{i,L'})=\bolda_i$, for~$i\in[M]$.
    \item [\textbf{(2)}] Let $(x_{1,L'+1},\ldots,x_{1,L'+4KL'})=RS_{2K}(\1 (\{\bolda_1,\ldots,\bolda_M\}))$, where~$RS_{2K}(\1 (\{\bolda_1,\ldots,\bolda_M\}))\in\{0,1\}^{4kL'}$ is the redundancy of a systematic Reed-Solomon code that corrects~$2K$ substitutions in~$\1 (\{\bolda_1,\ldots,\bolda_M\})$, i.e., $(\1 (\{\bolda_1,\ldots,\bolda_M\}),$\newline$RS_{2K}(\1 (\{\bolda_1,\ldots,\bolda_M\})))$ can be recovered from $2K$ substitution errors. 
    \item [\textbf{(3)}] Place the information bits of~$\boldd_2$ in bits
    \begin{align*}
        &(x_{1,L'+4KL'+1},\ldots,x_{1,L}),\\ &(x_{i,L'+1},\ldots,x_{i,L})\mbox{ for }i\in[2,M-1]\mbox{, and}\\ &
        (x_{M,L'+1},\ldots,x_{M,L-2K\lceil\log ML\rceil}).
    \end{align*}
    \item [\textbf{(4)}] Define
    \begin{align}\label{eq:m}
        \boldm = (\boldx_1,\ldots,\boldx_{M-1},(x_{M,1},\ldots,x_{M,L-2K\lceil\log ML\rceil}))
    \end{align}
    and let $(x_{M,L-2K\lceil\log ML\rceil+1},\ldots,x_{M,L})=RS_K(\boldm)$, which is the Reed-Solomon redundancy that corrects~$K$ substitution errors in~$\boldm$. 
\end{itemize}  
Upon receiving the erroneous version\footnote{Since the strings among~$\{\boldx_i\}^M_{i=1}$ have distance at least~$2K+1$ to each other, the strings~$\{\boldx'_i\}^M_{i=1}$ are different.}~$\{\boldx'_1,\ldots,\boldx'_M\}$, the decoding procedure is as follows. 

\textbf{Decoding:} 
\begin{itemize}
    \item [\textbf{(1)}] Note that during the encoding process, the redundancy bits needed to correct the vector~$\1(\{\bolda_i\}^M_{i=1})$ from $2K$ substitutions are stored in~$\boldx_1$. Hence, we first identify the erroneous copy of~$\boldx_1$.
    To this end, find the unique string~$\boldx'_{i_0}$ such that the number of $1$-entries in~$(x'_{i_0,1},\ldots,$ $x'_{i_0,L'})$ is at least~$L'-K$. Since the strings in~$\{(x_{i,1},\ldots,x_{i,L'})\}^M_{i=1}$ have Hamming distance at least~$2K+1$, there is a unique such string, which is the erroneous copy  of~$(x_{1,1},\ldots,x_{1,L'})=\1_{L'}$. Hence, $\boldx'_{i_0}$ is an erroneous copy of~$\boldx_1$ and the string~$$(x'_{i_0,L'+1},\ldots,x'_{i_0,L'+4KL})$$ is an erroneous copy of~$(x_{1,L'+1},\ldots, $ $x_{1,L'+4KL'})=RS_{2K}(\1(\{\bolda_i\}^M_{i=1}))$. 
    \item [\textbf{(2)}] Let~$K'\le K$ be the number of substitution errors that occur in the indexing bits~$\{x_{i,1},\ldots,x_{i,L'}\}^M_{i=1}$. According to Lemma~\ref{lemma:distance1S}, the vector $\1 (\{(x_{i,1},\ldots,x_{i,L'})\}^M_{i=1})$ is within Hamming distance~$2K'$ from the vector~$\1 (\{(x'_{i,1},\ldots,x'_{i,L'})\}^M_{i=1})$. Hence, the Hamming distance between
    \begin{align*}
    &\bolds_1=(\1(\{(x'_{i,1},\ldots,x'_{i,L'})\}^M_{i=1}),(x'_{i_0,L'+1},\ldots,x'_{i_0,L'+4KL}))\mbox{ and}\\
    &\bolds_2=(\1(\{\bolda_i\}^M_{i=1}),RS_{2K}(\1(\{\bolda_i\}^M_{i=1})))	
    \end{align*} 
    is at most~$2K'+K-K'\le 2K$. Since~$\bolds_2$ is a codeword from a Reed-Solomon code of minimum distance~$2K+1$, it can be recovered from~$\bolds_1$ using any Reed-Solomon decoder. Hence, apply a Reed-Solomon decoder to obtain $\bolds_2$ from $\bolds_1$, extract the set $\{\bolda_i\}^M_{i=1}$, and feed it into the inverse mapping $(F^H_S)^{-1}$ (See Lemma \ref{theorem:polytimefh}) to decode~$d_1=(F^H_S)^{-1}(\{\bolda_i\}^M_{i=1})$.
    \item [\textbf{(3)}] Since~$\bolds_2$ is recovered, the strings~$\{(x_{i,1},\ldots,x_{i,L'})\}^M_{i=1}=\{\bolda_i\}^M_{i=1}$ are known. Sort~$\{(x_{i,1},\ldots,x_{i,L'})\}^M_{i=1}$ lexicographically in descending order.
    For each~$i\in[M]$, find the unique~$\pi(i)\in [M]$ such that~$d_H((x'_{\pi(i),1},\ldots,x'_{\pi(i),L'}),(x_{i,1},\ldots,x_{i,L'}))\le K$  (note that~$i_0=\pi(1)$). Similar to Step~(1), we conclude that the string~$\boldx'_{\pi(i)}$ is an erroneous copy of~$\boldx_i$,~$i\in [M]$, since the Hamming distance between~$(x_{j,1},\ldots,x_{j,L'})$ and~$(x_{i,1},\ldots,x_{i,L'})$ is at least~$2K+1$ for~$j\ne i$. Hence, the identify of ~$\{(x_{i,1},\ldots,x_{i,L'})\}^M_{i=1}$ are determined from~$\{(x'_{i,1},\ldots,x'_{i,L'})\}^M_{i=1}$.
    \item [\textbf{(4)}] Since~$\boldx'_{\pi(i)}$ is an erroneous copy of~$\boldx_i$,~$i\in[M]$.
    it follows that the concatenation~$\bolds'=(\boldx'_{\pi(1)},\ldots,\boldx'_{\pi(M)})$ is an erroneous copy of $(\boldx_{1},\ldots,\boldx_{M})=(\boldm,RS_K(\boldm))$, where~$\boldm$ is defined in \eqref{eq:m}. Note that there are at most $K$ substitution errors in $\bolds'$.  Therefore,~$(\boldx_{1},\ldots,\boldx_{M})$, and thus~$\boldd_2$,  can be recovered from $(\boldx'_{\pi(1)},\ldots,\boldx'_{\pi(M)})$ by using a Reed-Solomon decoder. \item [\textbf{(5)}] Output~$(d_1,\boldd_2)$.
\end{itemize}
Therefore, the codeword~$\{\boldx_i\}^M_{i=1}$ can be recovered. 
The redundancy of the code is 
\color{black}
\begin{align}\label{equation:inequality2}
    r(\mathcal{C})
    =&\log \binom{2^L}{M}-\log \binom{2^{L'}-(M-1)Q+M-1}{M-1}- [M(L-L')-4KL'-2K\lceil\log ML\rceil]\\
    \overset{(a)}{\le} &2K\log ML + (12K+2)\log M+O(K^3)+O(K\log\log ML),
\end{align}
where~$(a)$ is proved in Appendix~\ref{section:proofofinequality}.
The complexity of the encoding/decoding is that of computing the function~$F^H_S$, which is~$poly(M,L,K)$, as will be discussed in Section~\ref{section:computingfh}.

\section{Computing~$F^H_S$ in polynomial time}\label{section:computingfh}
In this section we present a polynomial time algorithm to compute the function~$F^H_S$ and thus complete the code construction in Section~\ref{section:robusti}. The result is as follows.
\begin{lemma}\label{theorem:polytimefh}
	For integers~$M,L,K$,~$L'\triangleq 3\log M  + 4K^2+2$ and~$Q=\sum^{2K}_{i=0}\binom{L'}{i}$, there exists a  mapping~$F^H_S: \left[\binom{2^{L'}-(M-1)Q+M-1}{M-1}\right]\rightarrow \binom{\{0,1\}^{L'}}{M}$, computable in~$poly(M,L)$ time, such that for any~$d\in \left[\binom{2^{L'}-(M-1)Q+M-1}{M-1}\right]$, we have that~$F^H_S(d)\in \cS^H$. 
 In addition, there exists a map $(F^H_S)^{-1}$, computable in $poly(M,L)$ time, such that given any $F^H_S(d)$ for some $d\in \left[\binom{2^{L'}-(M-1)Q+M-1}{M-1}\right]$, we have that $(F^H_S)^{-1}(F^H_S(d))=d$.
\end{lemma}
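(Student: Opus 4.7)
The plan is to algorithmically realize the greedy argument in Lemma~\ref{lemma:lowerboundS}. Given the input $d \in \left[\binom{N+M-1}{M-1}\right]$ with $N \triangleq 2^{L'}-(M-1)Q$, I would first apply the combinatorial numbering function $F_{com}$ to unpack $d$ into a strictly increasing $(M-1)$-subset $\{s_2 < \cdots < s_M\} \subseteq [N+M-1]$, and then reparameterize via $t_i \triangleq s_i-(i-2)$, yielding a non-decreasing sequence of ``ranks'' $(t_2,\ldots,t_M)$ with each $t_i\in[N+1]$.

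Next I would run the greedy construction. Setting $\bolda_1=\1_{L'}$, for each $i=2,\ldots,M$ define
\[
\cV_{i-1}\triangleq\{\boldx\in\{0,1\}^{L'}:d_H(\boldx,\bolda_j)\ge 2K+1\text{ for all }j<i\}
\]
and take $\bolda_i$ to be the $t_i$-th lexicographically smallest element of $\cV_{i-1}$. By the proof of Lemma~\ref{lemma:lowerboundS}, $|\cV_{i-1}|\ge 2^{L'}-(i-1)Q$, which dominates $t_i$, so the selection is always well-defined and $F^H_S(d)=\{\bolda_1,\ldots,\bolda_M\}$ lies in $\cS^H$ by construction. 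The inverse $(F^H_S)^{-1}$ simply reverses this pipeline: recognize $\bolda_1$ as the unique all-ones element, replay the greedy procedure to reconstruct each $\cV_{i-1}$, read off $t_i$ as the lexicographic rank of $\bolda_i$ inside $\cV_{i-1}$, recover $s_i=t_i+(i-2)$, and apply $F_{com}^{-1}$. Injectivity is then automatic since the whole pipeline is a deterministic, reversible composition on its image.

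The hardest part will be the $poly(M,L)$ running-time bound. Since $\cV_{i-1}$ can have size exponential in $L'$, one cannot enumerate it element-by-element, and even the excluded union $\bigcup_{j<i}B_{2K}(\bolda_j)$ has size up to $(i-1)Q$ that is potentially super-polynomial. I would therefore build each $\bolda_i$ bit-by-bit: at position $\ell$, count the number of valid completions of the current prefix whose $\ell$-th bit is $0$ versus $1$, and branch according to where the remaining value of $t_i$ falls. The needed count depends only on the running profile of Hamming distances from the partial prefix to each previously chosen $\bolda_j$ together with the Hamming-ball structure of the excluded region, and designing this counting subroutine so that it runs in $poly(M,L)$ without ever materializing $\cV_{i-1}$ or its complement is the central technical task that Section~\ref{section:computingfh} will carry out.
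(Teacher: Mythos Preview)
Your outline matches the paper's two-stage structure (unrank $d$ into a sequence of integers, then greedily grow the code bit by bit), but there is a real gap in the step you flag as ``the central technical task.'' Your plan commits to selecting $\bolda_i$ as the \emph{exact} $t_i$-th element of $\cV_{i-1}$, which means the bit-by-bit branching must know, for every prefix $\bolda$, the exact cardinality
\[
\bigl|\{\boldx : \boldx \text{ extends } \bolda\} \setminus \textstyle\bigcup_{j<i} B_{2K}(\bolda_j)\bigr|.
\]
The radius-$2K$ balls around the already chosen $\bolda_j$ are \emph{not} disjoint (only the radius-$K$ balls are, since the pairwise distance is merely $\ge 2K+1$), so computing this union size seems to require inclusion--exclusion over subsets of $\{\bolda_1,\dots,\bolda_{i-1}\}$; equivalently, tracking the full distance profile $(d_H(\bolda,\bolda_j))_{j<i}$ as you suggest gives a DP whose state space is $(2K{+}2)^{i-1}$, exponential in~$M$. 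The paper's key move is precisely to abandon exact counts: it replaces the union size by the \emph{sum} $N_H(\bolda,A)=\sum_{\boldc\in A}|\{\boldc':\text{prefix }\bolda,\ d_H(\boldc',\boldc)\le 2K\}|$, which overcounts strings lying in several balls but is additive over $A$ and has the closed form~\eqref{equation:computeN}. The splitting relation~\eqref{equation:recursion} still holds for this surrogate, so the bit-by-bit walk goes through; the price is that $q_i$ is no longer a rank in $\cV_{i-1}$, and one must separately enforce and exploit the gap $q_i-q_{i+1}\ge Q$ to control the output (cf.~\eqref{equation:qi2qi1}).

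There is a second, smaller gap. Your inverse says ``replay the greedy procedure to reconstruct each $\cV_{i-1}$,'' but $F^H_S(d)$ is an \emph{unordered} set in $\binom{\{0,1\}^{L'}}{M}$, so before you can replay anything you must recover which string was produced second, third, and so on. You never argue that your non-decreasing $t_i$'s force $\bolda_2<\bolda_3<\cdots<\bolda_M$; this is in fact true, but it requires proof, and the analogous monotonicity statement is exactly what the paper establishes around~\eqref{equation:orderai} (using the $Q$-gap between consecutive $q_i$'s). Without it, $(F^H_S)^{-1}$ is not well defined on sets.
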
 
The algorithm for computing $F^H_S$  consists of two steps. In the first step,  we map the integer~$d\in [\binom{2^{L'}-(M-1)Q+M-1}{M-1}]$ into a set of~$M$ different  integers~$q_1,\ldots,q_{M}\in[2^{L'}]$ such that~$q_1=2^{L'}$ and~$q_{i+1}\le q_i-Q$ for~$i\in[M-1]$. In the second step,
 we use~$q_i$ and a greedy algorithm  to generate~$\bolda_i$ sequentially for~$i\in[2,M]$, where each string~$\bolda_i$,~$i\in [2,M]$ is generated bit by bit. The first step is given in the following lemma.
\begin{lemma}\label{lemma:dtoq}
	There exists a   map~$F^H_{Q}:[\binom{2^{L'}-MQ+M-1}{M-1}]\rightarrow \binom{[2^{L'}]}{M}$, computable in~$poly(L',M)$ time, that maps an  integer~$d\in [\binom{2^{L'}-MQ+M-1}{M-1}]$ to an integer set~$\{q_1,\ldots,q_{M}\}\subseteq [2^{L'}]$ such that~$q_1=2^{L'}$,~$q_{i}\ge q_{i+1}+Q$ for~$i\in[M-1]$, and $q_M\ge Q$. In addition, there exists a map $(F^H_Q)^{-1}$, computable in $poly(L',M)$ time, such that for every $d\in[\binom{2^{L'}-MQ+M-1}{M-1}]$, we have that $(F^H_Q)^{-1}(F^H_Q(d))=d$.
\end{lemma}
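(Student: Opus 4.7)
The plan is to reduce the problem to enumerating non-increasing integer sequences and then invoke the combinatorial numbering map $F_{com}$ provided in the preliminaries.

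Since $q_1 = 2^{L'}$ is forced, the remaining task is to enumerate tuples $(q_2, \ldots, q_M)$ with $q_i \ge q_{i+1} + Q$ for $i \in [M-1]$ and $q_M \ge Q$. I would first apply the substitution $s_i \triangleq q_i - (M-i+1)Q$ for $i \in [2, M]$. A short check shows that $q_i - q_{i+1} \ge Q$ is equivalent to $s_i \ge s_{i+1}$; that $q_M \ge Q$ is equivalent to $s_M \ge 0$; and that $q_2 \le q_1 - Q$ is equivalent to $s_2 \le 2^{L'} - MQ$. Conversely, any non-increasing sequence $s_2 \ge \cdots \ge s_M$ with entries in $\{0, 1, \ldots, N-1\}$, where $N \triangleq 2^{L'} - MQ + 1$, yields a valid tuple via $q_i = s_i + (M-i+1)Q$ (one also checks automatically that the resulting $q_i$'s are distinct and lie in $[2^{L'}]$, since spacing by $Q\ge 1$ forces distinctness and $Q \le q_i \le 2^{L'}$). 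Thus the problem reduces to enumerating non-increasing sequences of length $M-1$ over $N$ symbols.

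Next, I would convert each such sequence into an $(M-1)$-subset of $[N+M-2]$ using the classical shift $t_i \triangleq s_i + (M-i) + 1$ for $i \in [2, M]$. Strict monotonicity $t_i > t_{i+1}$ follows from $s_i \ge s_{i+1}$, and the $t_i$'s lie in $\{1, \ldots, N+M-2\}$, giving a bijection between non-increasing sequences and $(M-1)$-subsets of $[N+M-2]$. Counting these subsets yields $\binom{N+M-2}{M-1} = \binom{2^{L'} - MQ + M - 1}{M-1}$, which exactly matches the domain size stated in the lemma. Composing this bijection with $F_{com}$ produces the desired $F^H_Q$. The inverse $(F^H_Q)^{-1}$ runs the three stages backwards: sort $\{q_1, \ldots, q_M\}$ in descending order, recover the $s_i$'s and $t_i$'s by undoing the elementary substitutions, and apply $F_{com}^{-1}$.

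I do not anticipate a serious obstacle --- the whole argument is a careful change of variables combined with the standard stars-and-bars / multisets-to-sets correspondence. The main thing to verify with care is that the endpoint constraints $q_M \ge Q$ and $q_2 \le 2^{L'} - Q$ translate precisely to the alphabet bounds $0 \le s_i \le N-1$, which is a routine arithmetic check. Since $F_{com}$ and $F_{com}^{-1}$ are computable in polynomial time, and the substitutions $q_i \leftrightarrow s_i \leftrightarrow t_i$ are elementary arithmetic on $O(L')$-bit integers, both $F^H_Q$ and its inverse attain the claimed $poly(L', M)$ running time.
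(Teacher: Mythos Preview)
Your proposal is correct and follows essentially the same approach as the paper: apply $F_{com}$ to obtain an $(M-1)$-subset of $[2^{L'}-MQ+M-1]$ and then shift to obtain the $q_i$'s. In fact, composing your two substitutions gives $t_i = q_i - (M-i+1)(Q-1)$, which is exactly the paper's single transformation $q_i = q'_i + (M-i+1)(Q-1)$; you have simply factored the shift through an intermediate non-increasing sequence via the stars-and-bars correspondence, whereas the paper does it in one step.
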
 
\begin{proof}
	Recall the combinatorial numbering map~$F_{com}$ that maps an integer in the range~$[\binom{n}{m}]$ to a set of~$m$ different and unordered integers in the range~$[n]$ for integers~$n$ and~$m\le n$.
	We map~$d\in [\binom{2^{L'}-MQ+M-1}{M-1}]$ to~$M-1$ different integers~$F_{com}(d)=\{q'_2,\ldots,q'_{M}\}$ such that~$2^{L'}-MQ+M-1\ge q'_2>q'_3>\ldots>q'_{M}$.
	Let~$q_1=2^{L'}$,~$q_i=q'_i+(M-i+1)(Q-1)$ for~$i\in[2,M]$, and~$F^H_Q(d)=\{q_1,\ldots,q_M\}$. Then we have that~$q_2\le 2^{L'}-Q$ and that~$q_{i}\ge q_{i+1}+Q$ for~$i\in [2,M-1]$. 
	
	To compute $(F^H_Q)^{-1}(\{q_1,\ldots,q_{M}\})$, where $\{q_1,\ldots,q_M\}=F^H_Q(d)$, we first compute $q'_i=q_i-(M-i+1)(Q-1)$ for $i\in[2,M]$. Since $q_1=2^{L'}$ and $q_{i+1}\le q_i-Q$, we have $q'_M<q'_{M_1}<\ldots<q'_2\le 2^{L'}-MQ+M-1$. 
	Note that the map~$F_{com}$ is invertible and computable in~$poly(L',M)$ time. One can compute $d=F^{-1}_{com}(\{q'_2,\ldots,q'_M\})$.
\end{proof}
We now turn to the second step. Given  integers~$F^H_Q(d)=\{q_1,\ldots,q_M\}$, we generate the indexing bits   $\{\bolda_i=(x_{i,1},\ldots,$ $x_{i,L'})\}^M_{i=1}\in \cS^H$. First, we have that~$\bolda_1=\1_{L'}$. We then generate the indexing string~$\bolda_i$ sequentially for~$i\in[2,M]$. Each~$\bolda_i$ is generated bit by bit in a recursive manner. The following definition is used throughout the algorithm.

For a set of strings~$A\subset \{0,1\}^{L'}$ and a string~$\bolda\in\{0,1\}^\ell$ of length~$\ell\in[L']$. Denote
\begin{align*}
	N_H(\bolda,A)=\sum_{\boldc:\boldc\in A}|\{\boldc':\boldc'\in\{0,1\}^{L'},~(c'_1,\ldots,c'_{\ell})=\bolda\mbox{ and }d_H(\boldc',\boldc)\le 2K\}|
\end{align*}
 as the sum over all $\boldc\in A$ of the number of sequences of length $L'$ that have prefix~$\bolda$ and have Hamming distance at most~$2K$ from~$\boldc$. The number~$N_H(\bolda,A)$ has the following properties (stated in Lemma \ref{lemma:twoproperties}) that are useful in our proof. The first property  
enables a recursion to generate each sequence~$\bolda_i$. 
 The second property provides a way to compute~$N_H(\bolda,A)$.
 \begin{lemma}\label{lemma:twoproperties}
  \begin{enumerate}
 	\item For any sequence~$\bolda\in\{0,1\}^\ell$ of length~$\ell\in[L'-1]$ and set~$A\subset\{0,1\}^{L'}$, we have 
 \begin{align}\label{equation:recursion}
 2^{L'-\ell}-N_H(\bolda,A)=(2^{L'-\ell-1}-N_H((\bolda,0),A))+(2^{L'-\ell-1}-N_H((\bolda,1),A))
 \end{align}	
 	where~$(\bolda,0)$ or~$(\bolda,1)$ is the concatenation of~$\bolda$ and a~$0$ or~$1$ bit respectively.
 	\item For any~$\bolda\in\{0,1\}^{\ell}$ and~$A\subset\{0,1\}^{L'}$, we have
 	\begin{align}\label{equation:computeN}
 	N_H(\bolda,A)=\sum_{\boldc:\boldc\in A}\sum^{2K-d_H(\bolda,(c_1,\ldots,c_\ell))}_{i=0}\binom{L'-\ell}{i}.
 	\end{align}
 \end{enumerate}	
 \end{lemma}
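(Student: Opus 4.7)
The plan is to prove both parts by elementary counting, splitting the set of strings of length $L'$ with a given prefix according to the next bit (for part 1) and according to the prefix vs.\ suffix parts of the Hamming distance (for part 2).

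For part 1, I would begin by observing the additive decomposition at the level of $N_H$ itself. Every string $\boldc'\in\{0,1\}^{L'}$ whose prefix of length $\ell$ equals $\bolda$ has an $(\ell+1)$-th bit which is either $0$ or $1$, and these two cases are disjoint and exhaustive. Hence for any single $\boldc\in A$, the number of $\boldc'$ of length $L'$ with prefix $\bolda$ and $d_H(\boldc',\boldc)\le 2K$ equals the sum of the corresponding counts for prefixes $(\bolda,0)$ and $(\bolda,1)$. Summing this identity over $\boldc\in A$ yields $N_H(\bolda,A)=N_H((\bolda,0),A)+N_H((\bolda,1),A)$. Combining this with the trivial identity $2^{L'-\ell}=2\cdot 2^{L'-\ell-1}$ and rearranging gives \eqref{equation:recursion}.

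For part 2, fix $\boldc\in A$ and write $d_1=d_H(\bolda,(c_1,\ldots,c_\ell))$ for the contribution of the first $\ell$ coordinates to the Hamming distance. Any string $\boldc'$ with prefix $\bolda$ satisfies
\begin{equation*}
d_H(\boldc',\boldc)=d_1+d_H((c'_{\ell+1},\ldots,c'_{L'}),(c_{\ell+1},\ldots,c_{L'})),
\end{equation*}
so the condition $d_H(\boldc',\boldc)\le 2K$ is equivalent to the suffix being within Hamming distance $2K-d_1$ from $(c_{\ell+1},\ldots,c_{L'})$ (and vacuously gives zero valid strings when $d_1>2K$, which is consistent with interpreting the inner sum as empty). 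The number of length-$(L'-\ell)$ binary strings at Hamming distance at most $2K-d_1$ from a fixed vector is $\sum_{i=0}^{2K-d_1}\binom{L'-\ell}{i}$. Summing over $\boldc\in A$ gives \eqref{equation:computeN}.

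Neither step presents a real obstacle; the main thing to be careful about is the edge case $d_1>2K$, where the inner binomial sum should be read as $0$ so that such $\boldc$'s contribute nothing to $N_H(\bolda,A)$, matching the definition. The recursion in part 1 is the formula that will drive the greedy bit-by-bit construction of each $\bolda_i$ in the sequel, while part 2 gives the per-step evaluation needed to implement that greedy choice in polynomial time.
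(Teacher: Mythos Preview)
Your proposal is correct and follows essentially the same argument as the paper: both parts are handled by the same elementary counting---splitting on the $(\ell+1)$-th bit for part~1, and decomposing the Hamming distance into its prefix and suffix contributions for part~2. Your version is slightly more explicit (you spell out the suffix-ball count and flag the $d_1>2K$ edge case), but the underlying ideas and structure are identical.
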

\begin{proof}
	 Note that for any sequence~$\boldc$, the~$(\ell+1)$-th bit of any sequence~$\boldc'$ satisfying~$(c'_1,\ldots,c'_{\ell})=\bolda$ is either~$0$ or~$1$. Hence 
	\begin{align*}
	&|\{\boldc':\boldc'\in\{0,1\}^{L'},~(c'_1,\ldots,c'_{\ell})=\bolda\mbox{ and }d_H(\boldc',\boldc)\le 2K\}|\\
	= &|\{\boldc':\boldc'\in\{0,1\}^{L'},~(c'_1,\ldots,c'_{\ell+1})=(\bolda,0)\mbox{ and }d_H(\boldc',\boldc)\le 2K\}|\\
	&+ |\{\boldc':\boldc'\in\{0,1\}^{L'},~(c'_1,\ldots,c'_{\ell+1})=(\bolda,1)\mbox{ and }d_H(\boldc',\boldc)\le 2K\}|,
	\end{align*}
	which implies
	Eq.~\eqref{equation:recursion} by definition of $N_H(\bolda,A)$.
	Moreover,
	for any sequence~$\boldc\in\{0,1\}^{L'}$, we have that
	\begin{align*}
	|\{\boldc':\boldc'\in\{0,1\}^{L'},~(c'_1,\ldots,c'_{\ell})=\bolda\mbox{ and }d_H(\boldc',\boldc)\le 2K\}|=\sum^{2K-d_H(\bolda,(c_1,\ldots,c_\ell))}_{i=0}\binom{L'-\ell}{i}
	\end{align*}
	Hence the number~$N_H(\bolda,A)$ can be computed by Eq.~\eqref{equation:computeN}.
\end{proof}

Next, we present the algorithm (Algorithm \ref{alg:ptoasubstitution}) that takes~$F^H_Q(d)=\{q_1,\ldots,q_M\}$ as input and outputs~$\bolda_i$ such that $\{\bolda_1,\ldots,\bolda_M\}\in\cS^H$ and that the decimal presentation~$\decimal(\bolda_i)$ of~$\bolda_i,i\in[M]$ satisfies
\begin{align}\label{equation:ai}
\decimal(\bolda_i)=q_i-1+\sum_{\ell:a_{i,\ell}=1} N_H((a_{i,1},\ldots,a_{i,\ell-1},0),\{\bolda_j\}^{i-1}_{j=1}).
\end{align}
We then show that the sequences~$\bolda_i,i\in[M]$ satisfying~\eqref{equation:ai} are decodable, i.e., we can recover~the set~$\{q_1,\ldots,q_M\}$ from~$\{\bolda_1,\ldots,\bolda_M\}$. 

%
%
%
%
%
%
%
\begin{algorithm}
  \caption{Encoding from $\{q_1,\ldots,q_M\}$ to $\{\bolda_1,\ldots,\bolda_M\}$}\label{alg:ptoasubstitution}
  \begin{algorithmic}[1]
      \For{$i\in [M]$}
        \State $q=q_i$
        \For{$\ell\in[L']$}
            \If{$2^{L'-\ell}-N_H((a_{i,1},\ldots,a_{i,\ell-1},0),\{\bolda_j\}^{i-1}_{j=1})\ge q$}
            \State $a_{i,\ell}=0$
            \Else
            \State $q=q-(2^{L'-\ell}-N_H((a_{i,1},\ldots,a_{i,\ell-1},0),\{\bolda_j\}^{i-1}_{j=1}))$
            \State $a_{i,\ell}=1$
            \EndIf
        \EndFor
      \EndFor
      \State \textbf{return} $\{\bolda_1,\ldots,\bolda_M\}$
  \end{algorithmic}
\end{algorithm}
The generation of~$\bolda_i,i\in[M]$ in Algorithm \ref{alg:ptoasubstitution} can be intuitively characterized as traversing a complete binary tree of~$L'+1$ layers. The walk starts at layer~$1$, i.e., the root of the binary tree, and ends at layer~$L'+1$ at one of the leaf nodes. At each step, it goes to one of its two child nodes, which represent the bits~$0$ and~$1$ respectively. Each string~$\bolda_i,i\in[M]$ is represented by the path of a walk. For each path~$\bolda_i=(a_{i,1},\ldots,a_{i,L'})$ and each layer~$\ell\in[L']$, assign the weight~$w(a_{i,\ell})=2^{L'-\ell}-N_H((a_{i,1},\ldots,a_{i,\ell-1},a_{i,\ell}),\{\bolda_j\}^{i-1}_{j=1})$ to node~$a_{i,\ell}$ in the~$\ell$-th layer, and the weight~$w(\bar{a}_{i,\ell})=2^{L'-\ell}-N_H((a_{i,1},\ldots,a_{i,\ell-1},1-a_{i,\ell}),\{\bolda_j\}^{i-1}_{j=1})$ to the sibling of node~$a_{i,\ell}$, i.e., the node that shares the same parent node with~$a_{i,\ell}$. From Eq.~\eqref{equation:recursion} we have that~$w(a_{i,\ell})=w(a_{i,\ell+1})+w(\bar{a}_{i,\ell+1})$ for~$\ell\in[L'-1]$. Moreover, we have that~$0<q\le w(a_{i,\ell})$ after the~$\ell$-th inner for loop in the~$i$-th outer for loop.
This is formalized in the following lemma, which can be used to prove that Eq.~\eqref{equation:ai} holds and that~$\{\bolda_1,\ldots,\bolda_M\}\in \cS^H$. 
\begin{lemma}\label{lemma:qequals1}
	For $\ell\in[L']$ and $i\in[M]$, after the~$\ell$-th inner for loop in the~$i$-th outer for loop in Algorithm \ref{alg:ptoasubstitution}, we have that
	\begin{align}\label{equation:qdynamics}
		0<q\le 2^{L'-\ell}-N_H((a_{i,1},\ldots,a_{i,\ell}),\{\bolda_j\}^{i-1}_{j=1}).
	\end{align} 
	In addition, at the end of the~$i$-th outer for loop, we have that~$q=1$.
\end{lemma}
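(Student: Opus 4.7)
The plan is to establish the claimed invariant by induction on $\ell$, separately for each $i$; it is convenient to also record the invariant at $\ell=0$, which I interpret as ``before the inner loop begins'' with the empty prefix $\varepsilon$. At $\ell=0$ the invariant reduces to $0 < q_i \le 2^{L'} - N_H(\varepsilon, \{\bolda_j\}_{j=1}^{i-1})$. Positivity follows from Lemma~\ref{lemma:dtoq}: the chain $q_1 > q_2 > \cdots > q_M \ge Q \ge 1$ gives $q_i \ge 1$. The upper bound follows by telescoping $q_i \ge q_{i+1} + Q$ to obtain $q_i \le 2^{L'} - (i-1)Q$, together with the union bound $N_H(\varepsilon, \{\bolda_j\}_{j=1}^{i-1}) \le (i-1)Q$, since each prior $\bolda_j$ contributes at most a Hamming ball of size $Q$.

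For the inductive step, suppose the invariant holds after the $(\ell-1)$-th inner iteration. Let $\bolda = (a_{i,1}, \ldots, a_{i,\ell-1})$ and $A = \{\bolda_j\}_{j=1}^{i-1}$. I would split on which branch the algorithm takes. If $2^{L'-\ell} - N_H((\bolda, 0), A) \ge q$, then $a_{i,\ell} = 0$ and $q$ is unchanged, so the invariant immediately becomes $0 < q \le 2^{L'-\ell} - N_H((\bolda, 0), A)$, which is exactly the claim for the new prefix $(\bolda,0)$. Otherwise $a_{i,\ell} = 1$ and $q$ is replaced by $q - (2^{L'-\ell} - N_H((\bolda, 0), A))$; positivity of the updated $q$ follows directly from the triggering condition, and the matching upper bound $2^{L'-\ell} - N_H((\bolda, 1), A)$ is obtained by subtracting $2^{L'-\ell} - N_H((\bolda, 0), A)$ from both sides of the inductive hypothesis and invoking Eq.~\eqref{equation:recursion}.

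To deduce $q = 1$ at the end of the $i$-th outer iteration, I will specialize the invariant to $\ell = L'$. Then $2^{L'-\ell} = 1$ and $N_H(\bolda_i, A)$ equals the number of previously constructed $\bolda_j$ within Hamming distance $2K$ of $\bolda_i$. The invariant $0 < q \le 1 - N_H(\bolda_i, A)$ together with the non-negativity and integrality of $N_H$ simultaneously forces $N_H(\bolda_i, A) = 0$ and $q = 1$. As a welcome side effect, $N_H(\bolda_i, A) = 0$ is precisely the pairwise-distance condition needed to guarantee $\{\bolda_1, \ldots, \bolda_M\} \in \cS^H$, which will be reused in the surrounding argument.

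I do not anticipate any substantive obstacle: once the correct invariant is written down, the induction is essentially forced by the doubling recursion in Eq.~\eqref{equation:recursion}. The only slightly delicate point is the base case, where one must combine the bound $q_i \le 2^{L'} - (i-1)Q$ from Lemma~\ref{lemma:dtoq} with a union bound on Hamming balls to correctly interpret $N_H$ at the empty prefix.
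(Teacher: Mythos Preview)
Your proposal is correct and follows essentially the same induction as the paper's proof; the paper merely starts the induction at $\ell=1$ rather than at the empty prefix, but the content is identical. One small remark: $N_H(\varepsilon,\{\bolda_j\}_{j=1}^{i-1})$ equals $(i-1)Q$ exactly (it is defined as a \emph{sum} over $\bolda_j$, not as the size of a union), so no union bound is needed at the base case.
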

\begin{proof}
	We prove Eq.~\eqref{equation:qdynamics} by induction on~$\ell$. For~$\ell=1$, according to Lemma~\ref{lemma:dtoq}, we have~$0<q=q_i\le 2^{L'}-(i-1)Q$ at the beginning of the~$i$-th outer for loop. Eq. \eqref{equation:qdynamics} holds because $N_H(\emptyset,\{\bolda_j\}^{i-1}_{j=1})=(i-1)Q$, where $\emptyset$ is an empty string. If~$a_{i,1}=0$, then according to the if condition in Algorithm \ref{alg:ptoasubstitution}, we have that~$0<q\le 2^{L'-\ell}-N_H(0,\{\bolda_j\}^{i-1}_{j=1})$ for~$\ell=1$, which proves~\eqref{equation:qdynamics}. Otherwise if~$a_{i,1}=1$, we have
	\begin{align*}
	0<q=&\;q_i- (2^{L'-1}-N_H(0,\{\bolda_j\}^{i-1}_{j=1}))\\
	\le &  \;	2^{L'}-(i-1)Q-(2^{L'-1}-N_H(0,\{\bolda_j\}^{i-1}_{j=1}))\\
	\overset{(a)}{=}&(2^{L'-1}-N_H(1,\{\bolda_j\}^{i-1}_{j=1}))
	\end{align*}
	where~$(a)$ follows from \eqref{equation:recursion} and the fact that $N_H(\emptyset,\{\bolda_j\}^{i-1}_{j=1})=(i-1)Q$~$N_H(\bolda,A)$.  
	Hence the claim holds for~$\ell=1$. Suppose Eq.~\eqref{equation:qdynamics} holds for~$\ell=m$. For~$\ell=m+1$, if~$a_{i,m+1}=0$, then from the if condition in the $\ell$-th inner loop, we have~$0<q\le 2^{L'-m-1}-N_H((a_{i,1},\ldots,a_{i,m},0),\{\bolda_j\}^{i-1}_{j=1})$. Otherwise if~$a_{i,m+1}=1$, we have that
	\begin{align*}
	0<q=&\;q_i- (2^{L'-m-1}-N_H((a_{i,1},\ldots,a_{i,\ell},0),\{\bolda_j\}^{i-1}_{j=1}))\\
\le & \; 	2^{L'-m}-N_H((a_{i,1},\ldots,a_{i,m}),\{\bolda_j\}^{i-1}_{j=1})-(2^{L'-m-1}-N_H((a_{i,1},\ldots,a_{i,m},0),\{\bolda_j\}^{i-1}_{j=1}))\\
\overset{(b)}{=}&\;(2^{L'-m-1}-N_H((a_{i,1},\ldots,a_{i,m},1),\{\bolda_j\}^{i-1}_{j=1})),		
	\end{align*}
	where~$(b)$ follows from Eq.~\eqref{equation:recursion}. Therefore, Eq.~\eqref{equation:qdynamics} holds for~$\ell=m+1$ and thus holds for~$\ell\in[L']$.
	Hence at the end of the $L'$-th inner loop in the $i$-th outer loop, we have that
	\begin{align}\label{equation:qequals1}
				0<q\le 2^{L'-L'}-N_H(\bolda_i,\{\bolda_j\}^{i-1}_{j=1})
				\le 1.
	\end{align}
	Hence~$q$ equals~$1$ at the end of the $L'$-th inner loop in the $i$-th outer loop.
\end{proof}
We now show that the strings~$\{\bolda_1,\ldots,\bolda_M\}$ generated in Algorithm \ref{alg:ptoasubstitution} belong to $\cS_H$. By Lemma~\ref{lemma:qequals1}, we have
\begin{align*}
q=2^{L'-L'}-N_H(\bolda_i,\{\bolda_j\}^{i-1}_{j=1})=1,
\end{align*}
at the end of the $i$-th outer for loop in Algorithm \ref{alg:ptoasubstitution}. This implies that~$N_H(\bolda_i,\{\bolda_j\}^{i-1}_{j=1})=0$ and thus~$d_H(\bolda_i,\bolda_j)\ge 2K+1$ for~$i\in[2,M]$ and~$j\in[i-1]$. Moreover, since~$q_1=2^{L'}$, we have that~$\bolda_1=\1_{L'}$, because $N_H(\bolda_i,\{\bolda_j\}^{i-1}_{j=1})=0$ for $i=1$ and the if condition in the $\ell$-th loop is always not satisfied. Therefore,~$\{\bolda_i\}^M_{i=1}\in\cS_H$. 

Next, we use Lemma~\ref{lemma:qequals1} to show that the strings~$\{\bolda_i\}^M_{i=1}$ satisfy Eq.~\eqref{equation:ai}. 
\begin{lemma}\label{lemma:ai}
	The output~$\{\bolda_i\}^M_{i=1}$ of the encoding algorithm satisfies Eq.~\eqref{equation:ai}. 
\end{lemma}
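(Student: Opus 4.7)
The plan is a straightforward bookkeeping argument: track the value of the variable $q$ throughout the $i$-th iteration of the outer loop of Algorithm~\ref{alg:ptoasubstitution}, and express the total change in $q$ as a sum over exactly those positions $\ell$ at which $a_{i,\ell}=1$.

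First, I would pin down the two endpoints of the iteration. At the start of the $i$-th outer iteration, by initialization $q=q_i$, and by Lemma~\ref{lemma:qequals1} the value of $q$ at the end of that iteration equals $1$. Next, I would inspect a single inner iteration: by the if/else branch in Algorithm~\ref{alg:ptoasubstitution}, in the $\ell$-th inner iteration $q$ is either left unchanged (precisely when $a_{i,\ell}=0$) or decremented by exactly $2^{L'-\ell}-N_H((a_{i,1},\ldots,a_{i,\ell-1},0),\{\bolda_j\}_{j=1}^{i-1})$ (precisely when $a_{i,\ell}=1$).

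Summing these per-step decrements across all $\ell\in[L']$ and equating the total decrement to $q_i-1$ yields
\begin{equation*}
q_i - 1 \;=\; \sum_{\ell:\,a_{i,\ell}=1}\Bigl(2^{L'-\ell} - N_H((a_{i,1},\ldots,a_{i,\ell-1},0),\{\bolda_j\}_{j=1}^{i-1})\Bigr).
\end{equation*}
Splitting this sum and recognizing $\sum_{\ell:\,a_{i,\ell}=1} 2^{L'-\ell} = \decimal(\bolda_i)$, then rearranging, gives precisely Eq.~\eqref{equation:ai}.

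I do not anticipate any real obstacle, since the update rule in the algorithm is designed to mirror the identity to be proved. The only nontrivial ingredient is that $q$ terminates at exactly $1$ rather than some other value in the range $(0,1]$, which is already supplied by Lemma~\ref{lemma:qequals1} (obtained from the telescoping relation \eqref{equation:recursion}). Everything else is algebraic rearrangement of a telescoping sum.
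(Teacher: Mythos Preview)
Your proposal is correct and matches the paper's proof essentially line for line: both track $q$ from $q_i$ down to $1$ (invoking Lemma~\ref{lemma:qequals1}), record that $q$ is decremented by $2^{L'-\ell}-N_H((a_{i,1},\ldots,a_{i,\ell-1},0),\{\bolda_j\}_{j=1}^{i-1})$ exactly when $a_{i,\ell}=1$, and rearrange the resulting identity into Eq.~\eqref{equation:ai}. Your write-up is even slightly more explicit than the paper's in spelling out the step $\sum_{\ell:a_{i,\ell}=1}2^{L'-\ell}=\decimal(\bolda_i)$.
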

\begin{proof}
	Note that in each inner for loop, the number~$q$ is reduced  by~$2^{L'-\ell}-N_H((a_{i,1},\ldots,a_{i,\ell-1},0),\{\bolda_j\}^{i-1}_{j=1})$ only when~$a_{i,\ell}=1$ and~$\ell\in[L']$. Since the number~$q$ equals~$q_i$ at the beginning of each outer for loop, and from  Lemma~\ref{lemma:qequals1} equals~$1$ at the end of each outer for loop, hence we have that
	\begin{align*}
		q_i-\sum_{\ell:a_{i,\ell}=1\mbox{ and }\ell\in[L']}(2^{L'-\ell}- N_H((a_{i,1},\ldots,a_{i,\ell-1},0),\{\bolda_j\}^{i-1}_{j=1}))=1
		,
	\end{align*}
	which implies~\eqref{equation:ai}.
\end{proof}
\begin{remark}
	By \eqref{equation:ai} and the definition of~$N_H((a_{i,1},\ldots,a_{i,\ell-1},0),\{\bolda_j\}^{i-1}_{j=1})$, we have the following alternative characterization of~$\decimal(\bolda_i)$,~$i\in[M]$.
	\begin{align}\label{equation:aialternative}
	\decimal(\bolda_i)=q_i-1+\sum_{j=1}^{i-1} |\{\boldc: decimal(\boldc) < decimal(\bolda_i)\mbox{ and }d_H(\boldc,\bolda_j)\le 2K \}|,
	\end{align}
	which is~$q_i-1$ plus the sum of number of strings that are lexicographically less than~$\bolda_i$ and have Hamming distance at most~$2K$ from~$\bolda_j$ over~$j<i$.
\end{remark}
Lemma~\ref{lemma:ai} immediately implies a decoding algorithm that transforms~$\{\bolda_i\}^M_{i=1}$ back to~$\{q_1,\ldots,q_M\}$.

\textbf{Decoding:} 
\begin{itemize}
	\item [\textbf{(1)}] Order the strings~$\{\bolda_{i}\}^{M}_{i=1}$ such that~$\bolda_1>\bolda_2>\ldots>\bolda_M$.
	\item [\textbf{(2)}]
	For~$i\in[M]$,
	\begin{align}\label{equation:decodingqisubstitution}
	q_i=\decimal(\bolda_i)+1+\sum_{\ell:a_{i,\ell}=1\mbox{ and }\ell\in[L']}N_H((a_{i,1},\ldots,a_{i,\ell-1},0),\{\bolda_j\}^{i-1}_{j=1}).
	\end{align}
\end{itemize}  
To show that the decoding is correct, we prove that the strings~$\bolda_i$,~$i\in[M]$ generated in Algorithm \ref{alg:ptoasubstitution} satisfy
\begin{align}\label{equation:orderai}
	\bolda_1>\bolda_2>\ldots>\bolda_M.
\end{align}
Then we conclude that the string~$\bolda_i$ obtained by ordering~$\{\bolda_i\}^M_{i=1}$ in Step (1) in the decoding procedure satisfies Eq.~\eqref{equation:ai}. Hence we have
Eq.~\eqref{equation:decodingqisubstitution} and thus~$q_i$,~$i\in[M]$ can be recovered. Suppose to the contrary, there exist~$\bolda_{i_1}>\bolda_{i_2}$ for some~$i_1>i_2$. Let~$\ell^*$ be the most significant bit where~$\bolda_{i_1}$ and~$\bolda_{i_2}$ differ, i.e.,~$(a_{i_1,1},\ldots,a_{i_1,\ell^*-1})=(a_{i_2,1},\ldots,a_{i_2,\ell^*-1})$ and~$a_{i_1,\ell^*}=1$ and~$a_{i_2,\ell^*}=0$. Then according to the if statement in Algorithm \ref{alg:ptoasubstitution}, we have that
\begin{align*}
	q_{i_1}-\sum_{\ell:a_{i_1,\ell}=1\mbox{ and }\ell\in[\ell^*]}(2^{L'-\ell}- N_H((a_{i_1,1},\ldots,a_{i_1,\ell-1},0),\{\bolda_j\}^{i_1-1}_{j=1}))&>0~\mbox{ and}\\
	q_{i_2}-\sum_{\ell:a_{i_1,\ell}=1\mbox{ and }\ell\in[\ell*]}(2^{L'-\ell}- N_H((a_{i_1,1},\ldots,a_{i_1,\ell-1},0),\{\bolda_j\}^{i_2-1}_{j=1}))&\le 0,
\end{align*}
which implies that
\begin{align}\label{equation:qi2qi1}
	q_{i_2}-q_{i_1}<& \sum_{\ell:a_{i_1,\ell}=1\mbox{ and }\ell\in[\ell^*]}(2^{L'-\ell}- N_H((a_{i_1,1},\ldots,a_{i_1,\ell-1},0),\{\bolda_j\}^{i_2-1}_{j=1}))\nonumber\\
	&-\sum_{\ell:a_{i_1,\ell}=1\mbox{ and }\ell\in[\ell^*]}(2^{L'-\ell}- N_H((a_{i_1,1},\ldots,a_{i_1,\ell-1},0),\{\bolda_j\}^{i_1-1}_{j=1}))\nonumber\\
	=&\sum_{\ell:a_{i_1,\ell}=1\mbox{ and }\ell\in[\ell^*]}(N_H((a_{i_1,1},\ldots,a_{i_1,\ell-1},0),\{\bolda_j\}^{i_1-1}_{j=1})-N_H((a_{i_1,1},\ldots,a_{i_1,\ell-1},0),\{\bolda_j\}^{i_2-1}_{j=1}))\nonumber\\
	=&\sum_{\ell:a_{i_1,\ell}=1\mbox{ and }\ell\in[\ell^*]}N_H((a_{i_1,1},\ldots,a_{i_1,\ell-1},0),\{\bolda_j\}^{i_1-1}_{j=i_2})\nonumber\\
	\overset{(a)}{\le}& \sum^{i_1-1}_{j=i_2}|\boldc:d_H(\boldc,\bolda_j)\le 2K|\nonumber\\
	=&(i_1-i_2)Q,
\end{align}
where~$(a)$ follows from the definition of~$N_H(\bolda,A)$ and the fact that the strings which have $(a_{i_1,1},\ldots,a_{i_1,\ell_1-1},0)$ and $(a_{i_1,1},\ldots,$ $a_{i_1,\ell_2-1},0)$ as prefixes, respectively, are different for~$a_{i_1,\ell_1}=1$,~$a_{i_1,\ell_2}=1$ and~$\ell_1\ne \ell_2$. Eq.~\eqref{equation:qi2qi1} contradicts to the fact that the integers~$\{q_1,\ldots,q_M\}=F^H_Q(d)$ satisfy~$q_{i}-q_{i+1}\ge Q$ for~$i\in[M-1]$, which implies~$q_{i_2}-q_{i_1}\ge (i_1-i_2)Q$.

	Since the calculation of~$N_H(\bolda,A)$ follows \eqref{equation:computeN} and has polynomial complexity, the complexity of the encoding/decoding procedure is polynomial in~$M$ and~$L'$. 
	
	Finally, to compute $(F^H_S)^{-1}(F^H_S(d))$ given $F^H_S(d)=\{\bolda_1,\ldots,\bolda_M\}$, we first use the decoding procedure described above to recover $\{q_1,\ldots,q_M\}$. Then we obtain $(F^H_Q)^{-1}(\{q_1,\ldots,q_{M}\})=d$, where $(F^H_Q)^{-1}$ is defined in Lemma \ref{lemma:dtoq}.

\section{Robust indexing for deletion/insertion errors}\label{section:computingfd}
In this section we show how the idea of robust indexing can be used for correcting deletion/insertion errors over unordered set of strings. The redundancy of the construction is~$O(K\log ML)$ for constant~$K$, which is order-wise optimal with respect to $M$ and $L$. 

Similar to the construction in Section~\ref{section:robusti}, 
we use the first~$L'$ bits~$(x_{i,1},\ldots,x_{i,L'})$,~$i\in[M]$ in each string~$\boldx_i$ as indexing bits and sort the strings~$\{\boldx_i\}^M_{i=1}$ according to the lexicographic order of~$\{(x_{i,1},\ldots,x_{i,L'})\}^M_{i=1}$. To protect the ordering, we use Reed-Solomon code redundancy to protect the indicator vector~$\1  (\{(x_{i,1},\ldots,x_{i,L'})\}^M_{i=1})$. Then, we need deletion correcting codes to protect the Reed-Solomon code redundancy from deletion errors. The difference between the schemes for deletion/insertion errors and for substitution errors is that for correcting deletion/insertion errors, we construct the indexing bits~$\{(x_{i,1},\ldots,x_{i,L'})\}^M_{i=1}$ such that the mutual deletion distance among~$\{(x_{i,1},\ldots,x_{i,L'})\}^M_{i=1}$, rather than the mutual Hamming distance as considered in Section~\ref{section:robusti}, is at least~$2K+1$, i.e., the deletion balls~$\cD_K((x_{i,1},\ldots,x_{i,L'}))$ and~$\cD_K((x_{j,1},\ldots,x_{j,L'}))$ do not intersect for~$i\ne j$. For any binary string $\boldx\in\{0,1\}^m$, its deletion ball $\cD_K(\boldx)$ is the collection of all substrings of $\boldx$ of length at least $m-K$.
Define the set
\begin{align*}
	\cS^D=\left\{ \{\bolda_1,\ldots,\bolda_M\}:\cD_K(\bolda_i)\cap\cD_K(\bolda_j)=\emptyset\mbox{ for }i\ne j \right\}.
\end{align*}
Then $\cS^D$ is a code of size $M$ consisting of strings that are resilient to deletion errors in the classical setting. 
The construction is based on the following two lemmas, where the first one is robust indexing for deletion/insertion errors, which will be proved in Section~\ref{subsection:fds} and the second one is a deletion code construction, which appeared in \cite{kdeletion}.
\begin{lemma}\label{lemma:fds}
	    For~$P=2^K\binom{L'}{K}^2$, 
		there exists a mapping~$F^D_S: \left[\binom{2^{L'}-(M-1)P+M-1}{M-1}\right]\rightarrow \binom{\{0,1\}^{L'}}{M}$, computable in~$poly(M,L)$ time, such that for any~$d\in \left[\binom{2^{L'}-(M-1)P+M-1}{M-1}\right]$, we have that~$F^D_S(d)\in \cS^D$. In addition, there exists a mapping $(F^D_S)^{-1}$, computable in $poly(M,L)$ time, such that for every  $d\in \left[\binom{2^{L'}-(M-1)P+M-1}{M-1}\right]$, we have that $(F^D_S)^{-1}(F^D_S(d))=d$. 
\end{lemma}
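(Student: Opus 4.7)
The plan is to mirror the proof of Lemma~\ref{theorem:polytimefh}, substituting deletion balls for Hamming balls throughout. First I would justify $P$ as the deletion counterpart of $Q$: for every $\boldc\in\{0,1\}^{L'}$, the number of $\boldc'\in\{0,1\}^{L'}$ with $\cD_K(\boldc')\cap\cD_K(\boldc)\ne\emptyset$ is at most $P=2^K\binom{L'}{K}^2$. Indeed, any such $\boldc'$ is a length-$L'$ supersequence of some $\boldy\in\cD_K(\boldc)$ with $|\boldy|\ge L'-K$; using $|\cD_K(\boldc)|\le\binom{L'}{K}$ and the bound $\binom{L'}{K}2^K$ on the number of length-$L'$ supersequences of any such $\boldy$, the product yields $P$. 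This bound lets Lemma~\ref{lemma:dtoq} transport verbatim (with $P$ replacing $Q$) to give a polynomial-time computable invertible map $F^D_Q$ from $\bigl[\binom{2^{L'}-(M-1)P+M-1}{M-1}\bigr]$ to integer sets $\{q_1,\ldots,q_M\}\subset[2^{L'}]$ satisfying $q_1=2^{L'}$ and $q_i\ge q_{i+1}+P$.

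Next I would define
\begin{align*}
N_D(\bolda,A)=\sum_{\boldc\in A}\sum_{\boldy\in\cD_K(\boldc)}\bigl|\{\boldc'\in\{0,1\}^{L'}:(c'_1,\ldots,c'_\ell)=\bolda,\ \boldy\in\cD_K(\boldc')\}\bigr|,
\end{align*}
which upper-bounds the count of length-$L'$ extensions of $\bolda$ whose deletion ball meets $\cD_K(\boldc)$ for some $\boldc\in A$. As a sum of indicators over length-$L'$ extensions of $\bolda$, it obeys the split-by-next-bit recursion~\eqref{equation:recursion}, and by the first step $N_D(\emptyset,A)\le|A|P$. I then run Algorithm~\ref{alg:ptoasubstitution} verbatim with $N_D$ replacing $N_H$. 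The induction of Lemma~\ref{lemma:qequals1} carries over using the recursion and the seed $q_i\le 2^{L'}-(i-1)P\le 2^{L'}-N_D(\emptyset,\{\bolda_j\}_{j<i})$, yielding $q=1$ at the end of every outer loop. This forces $N_D(\bolda_i,\{\bolda_j\}_{j<i})=0$, hence $\cD_K(\bolda_i)\cap\cD_K(\bolda_j)=\emptyset$ for $j<i$, so $\{\bolda_i\}_{i=1}^M\in\cS^D$. The strict lexicographic-decrease argument closing Section~\ref{section:computingfh} and the formula~\eqref{equation:decodingqisubstitution} (with $N_D$ in place of $N_H$) then yield $(F^D_S)^{-1}$ upon composing with $(F^D_Q)^{-1}$.

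The principal obstacle will be computing $N_D(\bolda,A)$ in polynomial time, since unlike $N_H$ it admits no closed-form binomial expression. My plan is to handle each pair $(\boldc,\boldy)$ independently: for each $\boldc\in A$ enumerate the at most $\binom{L'}{K}$ elements $\boldy\in\cD_K(\boldc)$, precompute the longest prefix of $\boldy$ matched as a subsequence of $\bolda$, and then count length-$(L'-\ell)$ binary suffixes that contain the remaining suffix of $\boldy$ as a subsequence via a standard $O(L'|\boldy|)$ dynamic program. Summed over all pairs $(\boldc,\boldy)$ and levels~$\ell$, the total cost is $\mathrm{poly}(M,L)$ for constant~$K$, completing the construction.
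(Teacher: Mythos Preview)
Your approach is essentially the paper's own: port the substitution-channel algorithm by replacing $N_H$ with a deletion-channel counterpart $N_D$ satisfying the same split-by-next-bit recursion, and argue $N_D(\emptyset,\{\bolda_j\}_{j<i})\le(i-1)P$ to seed the analogue of Lemma~\ref{lemma:qequals1}. The paper defines its $N_D$ as a count of triples $(\boldc',\Delta_1,\Delta_2)$ of deletion patterns rather than pairs $(\boldc',\boldy)$, and computes it via a six-parameter dynamic program that is polynomial in $K$ as well as in $M$ and $L'$; your enumeration of all $\boldy$ is conceptually simpler but costs $\binom{L'}{K}$ per string, so it meets the stated $\mathrm{poly}(M,L)$ bound only under the constant-$K$ assumption you make explicit.

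One imprecision to fix: as written, your bound $N_D(\emptyset,A)\le|A|P$ does not follow, because $\cD_K(\boldc)$ contains subsequences of every length in $[L'-K,L']$ and so $|\cD_K(\boldc)|$ can exceed $\binom{L'}{K}$. The remedy is the one your informal description already implements---restrict the inner sum to $|\boldy|=L'-K$. This preserves the recursion (it is still a sum over length-$L'$ extensions of $\bolda$), still yields $N_D(\bolda_i,\{\bolda_j\}_{j<i})=0\Rightarrow\cD_K(\bolda_i)\cap\cD_K(\bolda_j)=\emptyset$ (disjoint $K$-deletion balls is equivalent to having no common subsequence of length $L'-K$), and makes the bound $\binom{L'}{K}\cdot\sum_{i\le K}\binom{L'}{i}\le 2^K\binom{L'}{K}^2=P$ go through.
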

\begin{lemma}\label{lemma:kdeletioncodes}
	    For~any integer~$n$ and~$N=n+4K\log n +o(\log n)$, there exists an encoding function~$Enc:\{0,1\}^n\rightarrow \{0,1\}^{N}$, computable in~$O(n^{2K+1})$ time, and a decoding function~$Dec:\{0,1\}^{N-K}\rightarrow \{0,1\}^{n}$, computable in~$O(n^{K+1})$ time, such that for any~$\boldc\in\{0,1\}^n$ and substring~$\boldd\in \{0,1\}^{N-K}$ of~$Enc(\boldc)$, we have that~$Dec(\boldd)=\boldc$. 
\end{lemma}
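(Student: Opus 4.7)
The plan is to follow the syndrome-compression framework of Sima and Bruck~\cite{kdeletion}. As a warm-up, recall the classical Varshamov-Tenengolts code for $K=1$: the syndrome $V(\boldx)=\sum_{i=1}^{N}i\cdot x_i \bmod (N+1)$, occupying only $\log(N+1)$ bits, uniquely determines $\boldx$ from any length-$(N-1)$ subsequence. The goal for general $K$ is an analogous family of integer-valued syndromes whose total bit-length is $4K\log n + o(\log n)$ and which jointly determine $\boldx\in\{0,1\}^N$ from any length-$(N-K)$ subsequence; the codebook is then the pre-image of a fixed target syndrome vector, and the redundancy bits carry the information needed to hit that target.

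I would combine two ingredients: (i) higher-order moment syndromes $V_j(\boldx)=\sum_{i=1}^{N}\binom{i}{j}x_i$ modulo suitable values for $j=1,\ldots,K$, which handle deletions occurring within a single short segment; and (ii) coarser block-level hashes that localize which segments are affected. Concretely, I would partition the length-$N$ string into blocks of size $\Theta(\log n)$, store a hash of the block boundary pattern, and within each block rely on the moment syndromes to pinpoint the exact deleted positions. For encoding, one appends redundancy bits to $\boldc$ so that the resulting $\boldx$ has syndromes equal to prescribed target values (say zeros); existence of such an extension follows from a counting/pigeonhole argument comparing the number of possible syndrome outputs to the space of redundancy assignments, and making the selection constructive requires a greedy block-by-block fixing procedure, which accounts for the $O(n^{2K+1})$ encoding complexity. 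Decoding of $\boldd\in\{0,1\}^{N-K}$ computes its syndromes, compares them against the stored targets, and peels off deletions one at a time via a recursion over the $K$ moments; each peel requires an $O(n)$ search over candidate positions, giving $O(n^{K+1})$ overall.

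The main obstacle I expect is establishing \emph{unique decodability}: that no two distinct codewords share a length-$(N-K)$ subsequence. Pure moment syndromes alone are insufficient for this in general, since distinct strings can exhibit identical $(V_1,\ldots,V_K)$ discrepancies under distinct deletion patterns (think of shifts by a run of equal bits); the block-level location hashes are precisely what break such ambiguities, at the cost of the extra $3K\log n + o(\log n)$ redundancy beyond the bare moment syndromes. I would formalize this by induction on $K$: use the lowest-order moment syndrome together with the location hash to identify and correct one deletion, thereby reducing to a $(K-1)$-deletion instance on a slightly shorter codeword from the same family, and invoke the inductive hypothesis. The combination of the inductive correctness argument with the complexity accounting above yields the lemma.
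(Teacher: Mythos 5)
The paper does not actually prove this lemma: it is imported wholesale from the cited work \cite{kdeletion} (``a deletion code construction, which appeared in \cite{kdeletion}''), and is used purely as a black box. So there is no internal proof to compare against, and what you have written is an attempt to reprove a known but highly nontrivial external result.

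As a proof, your sketch has a genuine gap at exactly the point where the difficulty lives. The assertion that moment syndromes $V_j(\boldx)=\sum_i\binom{i}{j}x_i$, $j\le K$, together with ``block-level hashes'' of total size $4K\log n+o(\log n)$ suffice for unique decodability is the entire theorem, not a step toward it; you acknowledge that moments alone fail but then delegate the fix to unspecified hashes whose existence and size you do not establish. The proposed induction (``use the lowest-order syndrome and the location hash to identify and correct one deletion, reducing to a $(K-1)$-deletion instance'') does not go through: after $K$ deletions the moment discrepancies are a joint, order-dependent function of all $K$ deleted positions, so one cannot isolate and invert a single deletion from them, and the residual string after undoing one deletion is not a codeword of the same family with a clean $(K-1)$-deletion guarantee. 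The actual construction in \cite{kdeletion} avoids this by a much more elaborate route (restricting to pattern-dense strings, building a hierarchy of separating hashes, and selecting a good hash by exhaustive search over a polynomial-size family --- which is also the true source of the $O(n^{2K+1})$ encoding complexity, rather than a greedy block-fixing procedure). Likewise, the pigeonhole argument you invoke for the systematic encoding step needs the codomain of the syndrome map to be explicitly bounded, which again presupposes the missing construction. In short, the ingredients you list are directionally reasonable, but the proposal postpones rather than supplies the decisive combinatorial argument; citing \cite{kdeletion}, as the paper does, is the appropriate resolution.
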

\subsection{Code constructions}
The code construction is the same as that in Section~\ref{section:robusti} except that in this section, the indexing bits~$\{(x_{i,1},\ldots,x_{i,L'})\}^M_{i=1}$ are generated using the map~$F^D_S$, the details of which will be given in Section~\ref{subsection:fds}. In addition, a deletion code in Lemma~\ref{lemma:kdeletioncodes} is used to protect the concatenated string. 

Let the data~$\boldd$ to be encoded be represented by a tuple~$\boldd=(d_1,\boldd_2)$, where~$d_1\in \left[\binom{2^{L'}-(M-1)P+M-1}{M-1}\right]$ and 
$\boldd_2\in \{0,1\}^{n}$ such that~$n+4K\log n+o(\log n)=M(L-L')-4KL'-4K\log (4KL')-o(\log L')$, which implies that~$n=M(L-L')-4KL'-4K\lceil\log ML\rceil-o(\log ML)$.
We present the encoding/decoding procedure as follows. 

\textbf{Encoding:} 
\begin{itemize}
	\item [\textbf{(1)}] Let~$F^D_S(d_1)=\{\bolda_1,\ldots,\bolda_M\}\in\mathcal{S}^H$ such that~$\bolda_1=\1_{L'}$ and $\bolda_1>\bolda_2>\ldots>\bolda_M$.
	Let $(x_{i,1},\ldots,x_{i,L'})=\bolda_i$, for~$i\in[M]$.
	\item [\textbf{(2)}] Let $(x_{1,L'+1},\ldots,x_{1,L'+\alpha_{RS} })=Enc(RS_{2K}(\1 (\{\bolda_1,\ldots,\bolda_M\})))$, where $\alpha_{RS}=4KL'+4K\log (4KL')+o(\log (4KL'))$ is the length of the $K$-deletion correcting code (Lemma \ref{lemma:kdeletioncodes}) that protects the redundancy of the Reed-Solomon code $RS_{2K}(\1 (\{\bolda_1,\ldots,\bolda_M\}))$.
	\item [\textbf{(3)}] Place the deletion code~$Enc(\boldd_2)$ in bits
	\begin{align*}
	&(x_{1,L'+\alpha_{RS}+1},\ldots,x_{1,L}),\mbox{ and}\\ &(x_{i,L'+1},\ldots,x_{i,L})\mbox{ for }i\in[2,M].
	\end{align*}
\end{itemize}  
Upon receiving~$\{\boldx'_i\}^M_{i=1}$, the decoding procedure is as follows.

\textbf{Decoding:} 
\begin{itemize}
	\item [\textbf{(1)}] 
	Find the unique string~$\boldx'_{i_0}$ such that~$(x'_{i_0,1},\ldots,x'_{i_0,L'-K})=\1_{L'-K}$. Note that since the erroneous version of $\boldx_1$ has at least $L'-K$ $1$-entries in the first $L'$ bits, and hence such $i_0$ exists. Moreover, since $\cD_K(\bolda_1)\cap\cD_K(\bolda_i)=\emptyset$ for $i\in[2,M]$, such  $i_0$ is unique.  Then,~$\boldx'_{i_0}$ is an erroneous copy of~$\boldx_1$ and the string~$$(x'_{i_0,L'+1},\ldots,x'_{i_0,L'+\alpha_{RS}-K}).$$ is an erroneous copy, or more precisely, a length $\alpha_{RS}-K$ subsequence  of~$(x_{1,L'+1},\ldots, $ $x_{1,L'+\alpha_{RS}})=Enc(RS_{2K}(\1(\{\bolda_i\}^M_{i=1})))$. By definition of $Enc$ and Lemma \ref{lemma:kdeletioncodes}, we can correct the vector~$RS_{2K}(\1(\{\bolda_i\}^M_{i=1}))$ and use it to recover~$\1(\{\bolda_i\}^M_{i=1})$. This is because $K$ deletions affect at most $K$ strings among $\{\bolda_i\}^M_{i=1}$ and thus at most $2K$ entries in $\1(\{\bolda_i\}^M_{i=1})$, similar to Lemma \ref{lemma:distance1S}. Therefore,  the indexing bits~$\{(x_{i,1},\ldots,x_{i,L'})\}^M_{i=1}$ can be recovered from $\{\bolda_i\}^M_{i=1}$. Recover~$d_1=(F^D_S)^{-1}(\{\bolda_i\}^M_{i=1})$. 
	\item [\textbf{(2)}] For each~$i\in[M]$, find the unique~$\pi(i)\in [M]$ such that~$(x'_{\pi(i),1},\ldots,x'_{\pi(i),L'-K})$ is a length~$L'-K$ substring of $(x_{i,1},\ldots,x_{i,L'})$ (note that~$\pi(1)=i_0$). Again, since $\cD_K(\bolda_i)\cap\cD_K(\bolda_j)=\emptyset$ for any $i\ne j$, such $\pi$ is unique. Checking if a string is a substring of another can be done in linear time using a greedy algorithm. 
	\item [\textbf{(3)}] Since~$\boldx'_{\pi(i)}$ is an erroneous copy of~$\boldx_i$,~$i\in[M]$,
	the concatenation 
	\begin{align}
	\boldm'=&((x'_{\pi(1),L'+\alpha_{RS}+1},\ldots,x'_{\pi(1),L_1}),\nonumber\\
	&(x'_{\pi(2),L'+1},\ldots,x'_{\pi(2),L_2}),\ldots,(x'_{\pi(M),L'+1},\ldots,x'_{\pi(M),L_M})),\label{equation:concatenation}
	\end{align}
	 where~$L_i$ is the length of~$\boldx'_{\pi(i)}$, $i\in[M]$, is a length at least $|Enc(\boldd_2)|-K$ subsequence of~$Enc(\boldd_2)$. Use the decoder~$Dec(\boldm')=\boldd_2$ to recover $\boldd_2$. 
	\item [\textbf{(4)}] Output~$(d_1,\boldd_2)$.
\end{itemize}
Similar to \eqref{equation:inequality2}, the redundancy of the code can be bounded by 
\begin{align*}
	 r(\mathcal{C})
	=&\log \binom{2^L}{M}-\log \binom{2^{L'}-(M-1)P+M-1}{M-1} 
	\\
	 &- [M(L-L')-4KL'-4K\log (4KL')-o(\log (4KL'))-4K\lceil\log ML\rceil-o(\log ML)]\\
	 \le &4K\log ML + (12K+2)\log M+O(K^3)+o(\log ML).
\end{align*}
\begin{remark}
    The decoding procedure can be modified to correct a combination of at most $K$ deletions and insertions. In Step (2), instead of looking for a length $L'-K$ subsequence of $(x_{i,1},\ldots,x_{i,L'})$, we find the unique $\pi(i)$ such that there exists an $\ell\in[L'-K,L'+K]$ satisfying $d_D((x'_{\pi(i),1},\ldots,x'_{\pi(i),\ell}),(x_{i,1},\ldots,x_{i,L'}))\le K$, where $d_D(\boldx,\boldy)$ is the deletion distance between two binary strings $\boldx$ and $\boldy$, defined as the minimum sum of number of deletions in $\boldx$ and $\boldy$, respectively, such that the resulting strings are equal.
    Since $\{\bolda_i\}^M_{i=1}\in\cS^D$, such $\pi(i)$ is unique.
    
    It can be proved that the concatenation $\boldm'$ in \eqref{equation:concatenation}
	 has deletion distance at most $K$ of $Enc(\boldd_2)$.
	Note that a $K$-deletion code corrects a combination of at most $K$ deletions and insertions \cite{L66}, i.e., recovers a codeword from any sequence that is within deletion distance $K$ of the codeword. Then, $\boldd_2$ can be recovered.
\end{remark}

\subsection{Computing~$F^D_S$}\label{subsection:fds}
We now prove Lemma~\ref{lemma:fds}.
The robust indexing algorithm for generating the indexing strings~$\{x_{i,1},\ldots,x_{i,L'}\}$ is the same as in Section~\ref{section:computingfh} except that we replace the notations~$N_H(\bolda,A)$ and~$Q$, which are based on Hamming distance, with their deletion distance counterparts that will be defined later. For a string~$\boldc\in\{0,1\}^\ell$ and a set of indices~$\Delta=\{\delta_1,\ldots,\delta_r\}\subset [\ell]$, let~$\boldc(\Delta)$ be the length~$\ell-r$ subsequence of~$\boldc$ obtained by deleting bits~$(c_{\delta_1},c_{\delta_2},\ldots,c_{\delta_r})$ in~$\boldc$.

For sequences~$\boldc_1\in\{0,1\}^{\ell_1}$ and~$\boldc_2\in\{0,1\}^{\ell_2}$ and nonnegative integers~$r_1,r_2$, define the set
\begin{align*}
	\cI(\boldc_1,\boldc_2,r_1,r_2)=\{(\Delta_1,\Delta_2):\Delta_1\subseteq[\ell_1],|\Delta_1|\le r_1,\Delta_2\subseteq[\ell_2],|\Delta_2|\le r_2,\boldc_1(\Delta_1)=\boldc_2(\Delta_2)\}
\end{align*}
and the number
\begin{align}\label{equation:definen}
	N(\boldc_1,\boldc_2,r_1,r_2)=|\cI(\boldc_1,\boldc_2,r_1,r_2)|,
\end{align}
which is the number of ways to delete no more than~$r_1$ and~$r_2$ bits in~$\boldc_1$ and~$\boldc_2$, respectively, such that the resulting subsequences are identical. 
For a sequence~$\bolda\in\{0,1\}^{\ell}$ of length~$\ell\in[0,L']$ and a set of sequences~$A\subset \{0,1\}^{L'}$, define
\begin{align*}
	N_D(\bolda,A)=\sum_{\boldc\in A}\sum_{\boldc':\boldc'\in\{0,1\}^{L'} \mbox{ and }(c'_1,\ldots,c'_{\ell})=\bolda}N(\boldc',\boldc,K,K).
\end{align*}
For an empty sequence~$\bolda$ and a sequence~$\boldc$, we have that
\begin{align}\label{equation:emptyND}
	N_D(\bolda,\boldc)=\sum^{K}_{r=0}\binom{L'}{r}^22^{r}\triangleq P,
\end{align}
since~$N_D(\bolda,\boldc)$ is equal to the number of tuples $(\boldc',\Delta_1,\Delta_2)$ of sequences $\boldc'\in\{0,1\}^{L'}$ and index sets $\Delta_1,\Delta_2\subset [L']$ such that after no more than~$K$ deletions in indices $\Delta_1$ and $\Delta_2$ in $\boldc$ and $\boldc'$, respectively, we obtain the same subsequence $\boldc(\Delta_1)=\boldc'(\Delta_2)$. As mentioned above, $P$ serves as the counterpart of $Q$ for the deletion channel.

The algorithm for computing~$F^D_S$ follows a similar outline to that for computing~$F^H_S$. We first generate a set of numbers $q_1,\ldots,q_M$ such that $q_i\ge q_{i+1}+P$ for $i\in[M-1]$. Then, we generate strings $\bolda_1,\ldots,\bolda_M$ from $\{q_1,\ldots,q_M\}$,  by using the encoding procedure in Section \ref{section:computingfh} and replacing the numbers~$N_H((a_{i,1},\ldots,a_{i,\ell-1},0),\{\bolda_j\}^{i-1}_{j=1})$ and~$Q$ with~numbers $N_D((a_{i,1},\ldots,a_{i,\ell-1},0),\{\bolda_j\}^{i-1}_{j=1})$ and~$P$, respectively. To prove the correctness of the algorithm, 
we need to show that~$N_D(\bolda,A)$ satisfies the two properties similar to the ones in Eq.~\eqref{equation:recursion} and~Eq.~\eqref{equation:computeN}.
The first is that
\begin{align}\label{equation:sumnd}
N_D(\bolda,A)=N_D((\bolda,0),A)+N_D((\bolda,1),A)
\end{align}
for a sequence~$\bolda\in\{0,1\}^\ell$ of length~$\ell\in[L'-1]$ and a set~$A\subset\{0,1\}^{L'}$, which is a deletion counterpart of Eq.~\eqref{equation:recursion}. This can be proved by noticing that
\begin{align*}
	N_D(\bolda,A)=\sum_{\boldc':\boldc'\in\{0,1\}^{L'} \mbox{ and }(c'_1,\ldots,c'_{\ell})=\bolda}\sum_{\boldc\in A}N(\boldc',\boldc,K,K)
\end{align*}
and that for every sequence~$\boldc'\in\{0,1\}^{L'}$ that satisfies~$(c'_1,\ldots,c'_{\ell})=\bolda$, we have either~$c'_{\ell+1}=1$ or~$c'_{\ell+1}=0$. 

The second property is that the number~$N_D(\bolda,A)$ is computable in polynomial time. Since obtaining an explicit expression as in Eq.~\eqref{equation:computeN} is challenging, we compute the number~$N_D(\bolda,\boldc)$ using dynamic programming for two sequences~$\bolda\in\{0,1\}^\ell$ and~$\boldc\in\{0,1\}^{L'}$ such that~$\ell\in[0,L']$. Given~$\bolda$ and~$\boldc$, we compute
\begin{align}\label{equation:definenk1k2r1r2}
	n(\bolda,\boldc,\bolda,\boldc,k_1,k_2,r_1,r_2)=\sum_{\boldc':\boldc'\in\{0,1\}^{L'-\ell+k_1} \mbox{ and }(c'_1,\ldots,c'_{k_1})=(a_{\ell-k_1+1},\ldots,a_{\ell})}N(\boldc',(c_{L'-k_2+1},\ldots,c_{L'}),r_1,r_2)
\end{align}
Note that~$N_D(\bolda,\boldc)=n(\bolda,\boldc,\ell,L',K,K)$. In addition, by definition of $N_D(\bolda,A)$,  we have that $N_D(\bolda,A)=\sum_{\boldc\in A}N_D(\bolda,\boldc)$. Hence, we wish to compute $N_D(\bolda,\boldc)$ efficiently. Efficient computation of $N_D(\bolda,A)$ follows whenever $|A|=M$ is of polynomial size.

For~$k_1=0$, we have that
\begin{align}\label{equation:k1equals01}
n(\bolda,\boldc,0,k_2,r_1,r_2)=\sum_{\boldc':\boldc'\in\{0,1\}^{L'-\ell}}N(\boldc',(c_{L'-k_2+1},\ldots,c_{L'}),r_1,r_2),
\end{align}
which by Eq.~\eqref{equation:definen} and the definition of $\cI(\boldc_1,\boldc_2,r_1,r_2)$ equals~$0$ when~$L'-\ell-r_1>k_2$ or~$k_2-r_2>L'-\ell$. When~$L'-\ell-r_1\le k_2$ and~$k_2-r_2\le L'-\ell$, we show that

\begin{align}\label{equation:k1equals02}
n(\bolda,\boldc,0,k_2,r_1,r_2)&=\sum^{\min\{r_2,k_2-(L'-\ell-r_1)\}}_{i=k_2-(L'-\ell)}\binom{k_2}{i}\binom{L'-\ell}{L'-\ell-(k_2-i)}2^{L'-\ell-(k_2-i)},\mbox{ if~$k_2\ge L'-\ell$;  and}\\
n(\bolda,\boldc,0,k_2,r_1,r_2)&=\sum^{r_1}_{i=L'-\ell-k_2}\binom{k_2}{k_2-(L'-\ell-i)}\binom{L'-\ell}{i}2^{i}, \mbox{ otherwise.}\label{equation:k1equals03}
\end{align}


For~$k_2\ge L'-\ell$ and sets~$(\Delta_1,\Delta_2)\in\cI(\boldc',(c_{L'-k_2+1},\ldots,c_{L'}),r_1,r_2)$, the cardinality~$|\Delta_2|$ satisfies $$k_2-(L'-\ell)\le |\Delta_2|\le \min\{r_2,k_2-(L'-\ell-r_1)\},$$ because $\boldc'(\Delta_1)=(c_{L'-k_2+1},\ldots,c_{L'})(\Delta_2)$ (Recall that $\boldc(\Delta)$ is the subsequence of $\boldc\in\{0,1\}^\ell$ obtained after deleting bits with indices $\Delta\subset[\ell]$ in $\boldc$). For given~$|\Delta_2|\in [k_2-(L'-\ell),\min\{r_2,k_2-(L'-\ell-r_1)\}]$, there are~$\binom{k_2}{|\Delta_2|}$ ways to select~$\Delta_2$ and~$\binom{L'-\ell}{L'-\ell-(k_2-|\Delta_2|)}$ choices of~$\Delta_1$ such that there exists a  $\boldc'\in\{0,1\}^{L'-\ell}$ satisfying
$$(c_{L'-k_2+1},\ldots,c_{L'})(\Delta_2)=\boldc'(\Delta_1).$$ Moreover, given~$(c_{L'-k_2+1},\ldots,c_{L'}),\Delta_1$, and~$\Delta_2$, there are~$2^{L'-\ell-(k_2-|\Delta_2|)}$ choices of~$\boldc'$ such that $$(c_{L'-k_2+1},\ldots,c_{L'})(\Delta_2)=\boldc'(\Delta_1),$$ and hence Eq.~\eqref{equation:k1equals02} follows.

Similarly, when $k_2<L'-\ell$, the cardinality $|\Delta_1|$ satisfies $$L'-\ell-k_2\le |\Delta_1|\le \min\{r_1,L'-\ell-(k_2-r_2)\}.$$ For each $|\Delta_1|\in[L'-\ell-k_2,\min\{r_1,L'-\ell-(k_2-r_2)\}]$ and for $(c_{L'-k_2+1},\ldots,c_{L'})$, there are $\binom{L'-\ell}{|\Delta_1|}$ choices of $\Delta_1$ and $\binom{k_2}{k_2-(L'-\ell-|Delta_1|)}$ choices of $\Delta_2$, and $2^{|\Delta_1|}$ choices of $\boldc'\in\{0,1\}^{L'-\ell}$ satisfying $$(c_{L'-k_2+1},\ldots,c_{L'})(\Delta_2)=\boldc'(\Delta_1),$$ and hence Eq.~\eqref{equation:k1equals03} follows. 
Therefore, the number~$n_(k_1,k_2,r_1,r_2)$ can be computed when~$k_1=0$.

For~$k_1>0$,
we compute~$n(\bolda,\boldc,k_1,k_2,r_1,r_2)$ iteratively from~$k_1=0$ to~$k_1=\ell$ using the following recursion.
\begin{align}\label{equation:recursionn}
n(\bolda,\boldc,k_1,k_2,r_1,r_2)=&\sum_{k:k\in [L'-k_2+1,L'], c_{k}=a_{\ell-k_1+1}}n(\bolda,\boldc,k_1-1,L'-k,r_1,r_2-k+L'-k_2+1)\nonumber\\
&+2n(\bolda,\boldc,k_1-1,k_2,r_1-1,r_2),
\end{align} 
where $n(\bolda,\boldc,k_1,k_2,r_1,r_2)=0$ if $r_1<0$ or $r_2<0$. We now show that \eqref{equation:recursionn} holds. Recall the definition of $n(\bolda,\boldc,k_1,k_2,r_1,r_2)$ in \eqref{equation:definenk1k2r1r2}. 
Note that for any~$(\Delta_1,\Delta_2)\in \cI(\boldc',(c_{L'-k_2+1},$ $\ldots,c_{L'}),r_1,r_2)$, we have either~$1\in\Delta_1$ or~$1\notin\Delta_1$. When~$1\in\Delta_1$, then$$(c'_2,\ldots,c'
_{L'-\ell+k_1})(\Delta_1\backslash\{1\}-1)=(c_{L'-k_2+1},\ldots,c_{L'})(\Delta_2),$$ where 
$\Delta-i=\{j-i:j\in \Delta\}$ for any set $\Delta$ and integer $i$. Note that there are~$n(\bolda,\boldc,k_1-1,k_2,r_1-1,r_2)$ choices of~$((c'_2,\ldots,c'
_{L'-\ell+k_1}),\Delta_1\backslash\{1\}-1,\Delta_2)$ such that 
\begin{align*}
    (c'_2,\ldots,c'
_{k_1})&=(a_{\ell-k_1+2,\ldots,a_\ell}),\mbox{ and}\\
(c'_2,\ldots,c'
_{L'-\ell+k_1})(\Delta_1\backslash\{1\}-1)&=(c_{L'-k_2+1},\ldots,c_{L'})(\Delta_2).
\end{align*}

Since $c'_1$ can be either $0$ or $1$, therefore
\begin{itemize}
    \item When $1\in\Delta_1$, we have $2n(\bolda,\boldc,k_1-1,k_2,r_1-1,r_2)$ choices of $(\boldc',\Delta_1,\Delta_2)$ such that $(c'_2,\ldots,c'
_{k_1})=(a_{\ell-k_1+2,\ldots,a_\ell})$ and $\boldc'(\Delta_1)=(c_{L'-k_2+1},\ldots,c_{L'})(\Delta_2)$. 
\item When~$1\notin\Delta_1$, let $k$ be the minimum index such that $k\in [L'-k_2+1,L']$ and $(k-L'+k_2)\notin\Delta_2$. Then, we have that $c_k=c'_1=a_{l-k_1+1}$,  $[k-L'+k_2-1]\in(\Delta_2\cup\{0\})$, and $(c'_2,\ldots,c'
_{L'-\ell+k_1})(\Delta_1-1)=(c_{k+1},\ldots,c_{L'})(\Delta_2\backslash[k-L'+k_2-1]-k+L'-k_2)$. There are $n(\bolda,\boldc,k_1-1,L'-k,r_1,r_2-k+L'-k_2+1)$ choices of $((c'_2,\ldots,c'
_{k_1}),\Delta_1-1,\Delta_2\backslash[k-L'+k_2-1]-k+L'-k_2)$ such that $(c'_2,\ldots,c'
_{k_1})(\Delta_1-1)=(c_{k+1},\ldots,c_{L'})(\Delta_2\backslash[k-L'+k_2-1]-k+L'-k_2)$ and that $(c'_2,\ldots,c'
_{k_1})=(a_{\ell-k_1+2,\ldots,a_\ell})$. Therefore, there are $n(\bolda,\boldc,k_1-1,L'-k,r_1,r_2-k+L'-k_2+1)$ choices of $(\boldc',\Delta_1,\Delta_2)$ such that $(c'_1,\ldots,c'_{k_1})=(a_{\ell-k_1+1},\ldots,a_\ell)$ and $\boldc'(\Delta_1)=(c_{L'-k_2+1},\ldots,c_{L'})(\Delta_2)$. Note that for each $k$ satisfying $k\in[L'-k_2+1,L']$ and $c_k=c'_1=a_{\ell-k_1+1}$, there are $n(\bolda,\boldc,k_1-1,L'-k,r_1,r_2-k+L'-k_2+1)$ choices of such $(\boldc',\Delta_1,\Delta_2)$. In addition, different $k$ corresponds to different choices since $k$ is the minimum index such that $(k-L'+k_2)\notin\Delta_2$. Hence, there are $\sum_{k:k\in [L'-k_2+1,L'], c_{k}=a_{\ell-k_1+1}}n(\bolda,\boldc,k_1-1,L'-k,r_1,r_2-k+L'-k_2+1)$ choices of $(\boldc',\Delta_1,\Delta_2)$ such that $(c'_1,\ldots,c'
_{k_1})=(a_{\ell-k_1+1,\ldots,a_\ell})$ and $\boldc'(\Delta_1)=(c_{L'-k_2+1},\ldots,c_{L'})(\Delta_2)$.
\end{itemize}
 Summing up the number of choices of $(\boldc',\Delta_1,\Delta_2)$ over the two cases, we have \eqref{equation:recursionn}.

By \eqref{equation:k1equals02},~\eqref{equation:k1equals03}, and \eqref{equation:recursionn}, the number~$N_D(\bolda,\boldc)=n(\bolda,\boldc,\ell,L',K,K)$ can be recursively computed for any~$\bolda\in\{0,1\}^\ell$ and~$\boldc\in\{0,1\}^{L'}$. Therefore, the encoding/decoding can be computed in~$poly(M,L')$ time. 

We are now ready to present the algorithm that computes $F^D_S(d)$ for an integer $d\in \left[\binom{2^{L'}-MP+M-1}{M-1}\right]$. The algorithm is similar to the encoding procedure in Section \ref{section:computingfh}, by replacing  $N_H(\bolda,A)$ with $N_D(\bolda,A)$ for any sequence $\bolda$ and set of sequences $A$. In addition, the integers $q_i$ are generated such that $q_1=2^{L'}$ and $q_{i}-q_{i+1}\ge P$ for $i\in [M-1]$. Such $q_i$, $i\in[M]$ can be generated following the same argument in Lemma \ref{lemma:dtoq}, since $d\in \left[\binom{2^{L'}-MP+M-1}{M-1}\right]$. Given integers $q_i$, $i\in[M]$,  satisfying $q_1=2^{L'}$ and $q_{i}-q_{i+1}\ge P$ for $i\in [M-1]$, the encoding procedure for generating $\{\bolda_1,\ldots,\bolda_M\}$ is given in Algorithm \ref{alg:ptoadeletion}.
\begin{algorithm}
  \caption{Encoding from $\{q_1,\ldots,q_M\}$ to $\{\bolda_1,\ldots,\bolda_M\}$}\label{alg:ptoadeletion}
  \begin{algorithmic}[1]
      \For{$i\in [M]$}
        \State $q=q_i$
        \For{$\ell\in[L']$}
            \If{$2^{L'-\ell}-N_D((a_{i,1},\ldots,a_{i,\ell-1},0),\{\bolda_j\}^{i-1}_{j=1})\ge q$}
            \State $a_{i,\ell}=0$
            \Else
            \State $q=q-(2^{L'-\ell}-N_D((a_{i,1},\ldots,a_{i,\ell-1},0),\{\bolda_j\}^{i-1}_{j=1}))$
            \State $a_{i,\ell}=1$
            \EndIf
        \EndFor
      \EndFor
      \State \textbf{return} $\{\bolda_1,\ldots,\bolda_M\}$
  \end{algorithmic}
\end{algorithm}
The correctness of Algorithm \ref{alg:ptoadeletion} follows similar arguments to the proof of correctness of Algorithm \ref{alg:ptoasubstitution}. To this end, we prove that the input $\{q_1,\ldots,q_M\}$ and the output $\{\bolda_1,\ldots,\bolda_M\}$  satisfy
\begin{align}\label{equation:aideletion}
\decimal(\bolda_i)=q_i-1+\sum_{\ell:a_{i,\ell}=1\mbox{ and }\ell\in[L']} N_D((a_{i,1},\ldots,a_{i,\ell-1},0),\{\bolda_j\}^{i-1}_{j=1})
\end{align}
and 
$\{\bolda_1,\ldots,\bolda_M\}\in \cS^D$.  The following is a deletion distance counterpart of Lemma \ref{lemma:qequals1}, by replacing $N_H(\bolda,A)$ with $N_D(\bolda,A)$ for any sequence $\bolda\in\{0,1\}^\ell$ and set $A\in \{0,1\}^{L'}$.
\begin{lemma}\label{lemma:qequals1deletion}
For $\ell\in[L']$ and $i\in[M]$, 	
 after the~$\ell$-th inner for loop in the~$i$-th outer for loop in Algorithm \ref{alg:ptoadeletion}, we have that
	\begin{align}\label{equation:qdynamicsdeletion}
		0<q\le 2^{L'-\ell}-N_D((a_{i,1},\ldots,a_{i,\ell}),\{\bolda_j\}^{i-1}_{j=1})
	\end{align} 
	At the end of the~$i$-th outer for loop, we have that~$q=1$.
\end{lemma}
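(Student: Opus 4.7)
The plan is to mirror the proof of Lemma~\ref{lemma:qequals1} verbatim, substituting $N_D$ for $N_H$ and $P$ for $Q$ throughout, since every ingredient used in the Hamming proof has an established deletion-metric analog: the bit-recursion \eqref{equation:sumnd} plays the role of \eqref{equation:recursion}; the ball-size identity \eqref{equation:emptyND}, i.e.\ $N_D(\emptyset,\boldc)=P$, plays the role of the identity $N_H(\emptyset,\boldc)=Q$ used implicitly in the Hamming proof; and the spacing $q_{i}-q_{i+1}\ge P$ (guaranteed by the deletion version of Lemma~\ref{lemma:dtoq}) replaces $q_i-q_{i+1}\ge Q$.

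First, I would prove \eqref{equation:qdynamicsdeletion} by induction on $\ell$. For the base case $\ell=1$, at the start of the $i$-th outer loop we have $q=q_i\le 2^{L'}-(i-1)P$; combined with $N_D(\emptyset,\{\bolda_j\}_{j=1}^{i-1})=(i-1)P$ (from \eqref{equation:emptyND}), this gives $q\le 2^{L'}-N_D(\emptyset,\{\bolda_j\}_{j=1}^{i-1})$. If the algorithm sets $a_{i,1}=0$, the \textbf{if} condition directly yields \eqref{equation:qdynamicsdeletion}; if it sets $a_{i,1}=1$, then $q$ is replaced by $q-(2^{L'-1}-N_D(0,\{\bolda_j\}_{j=1}^{i-1}))$, and the new $q$ is bounded by $2^{L'-1}-N_D(1,\{\bolda_j\}_{j=1}^{i-1})$ after invoking \eqref{equation:sumnd} on the prefix $\emptyset$. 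The inductive step from $\ell=m$ to $\ell=m+1$ is identical in structure: in the $a_{i,m+1}=0$ case the \textbf{if} condition gives the bound directly, and in the $a_{i,m+1}=1$ case one subtracts and applies \eqref{equation:sumnd} to the prefix $(a_{i,1},\ldots,a_{i,m})$ to re-express $2^{L'-m}-N_D((a_{i,1},\ldots,a_{i,m}),\cdot)-(2^{L'-m-1}-N_D((a_{i,1},\ldots,a_{i,m},0),\cdot))$ as $2^{L'-m-1}-N_D((a_{i,1},\ldots,a_{i,m},1),\cdot)$.

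Finally, at the end of the $L'$-th inner loop we obtain $0<q\le 2^{0}-N_D(\bolda_i,\{\bolda_j\}_{j=1}^{i-1})\le 1$; since $q$ remains a positive integer throughout the algorithm, this forces $q=1$.

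Since the argument is a direct translation of Lemma~\ref{lemma:qequals1}, I do not expect any genuine obstacle. The only point that requires a sanity check, rather than routine copying, is verifying that the deletion-metric counterparts of the two structural identities used in the Hamming proof truly hold in the form needed: namely, that \eqref{equation:sumnd} is an equality (not just an inequality) so that the subtraction step in the $a_{i,\ell}=1$ branch produces an exact bound, and that $N_D(\emptyset,A)$ equals $|A|\cdot P$ (rather than being merely bounded by it) so that the base case matches the spacing $q_i\le 2^{L'}-(i-1)P$ tightly. Both are already established in the preceding text, so the induction carries through.
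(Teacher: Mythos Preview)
Your proposal is correct and is essentially identical to the paper's own proof, which likewise invokes the two deletion-metric identities $N_D(\emptyset,\{\bolda_j\}_{j=1}^{i-1})=(i-1)P$ (from \eqref{equation:emptyND}) and the bit-recursion \eqref{equation:sumnd}, and then defers the remainder of the argument to Lemma~\ref{lemma:qequals1}. The paper simply states these two observations and says ``the rest of the proof follows the same as in Lemma~\ref{lemma:qequals1},'' which is exactly the plan you outlined.
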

\begin{proof}
The proof is similar to that of Lemma \ref{lemma:qequals1}, by noticing that 
	\begin{align*}
		N_D(0,\{\bolda_j\}^{i-1}_{j=1})+N_D(1,\{\bolda_j\}^{i-1}_{j=1})=&\sum^{i-1}_{j=1}N_D(\emptyset,\bolda_j)\\
		\overset{(a)}{=}&(i-1)P
	\end{align*}
	where $\emptyset$ is the empty string and $(a)$ follows from \eqref{equation:emptyND} and the fact that $N_D(\bolda,A)=\sum_{\boldc\in A}N_D(\bolda,\boldc)$. In addition, we have \eqref{equation:sumnd}, which is the deletion metric version of \eqref{equation:recursion}. The rest of the proof follows the same as in Lemma \ref{lemma:qequals1}.
\end{proof}
From Lemma \ref{lemma:qequals1deletion}, we have
\begin{align*}
q=2^{L'-L'}-N_D(\bolda_i,\{\bolda_j\}^{i-1}_{j=1})=1,
\end{align*}
at the end of the $i$-th outer for-loop, $i\in[M]$.
Hence, $N_D(\bolda_i,\{\bolda_j\}^{i-1}_{j=1})=0$ for $i\in[M]$ and $\cD_K(\bolda_i)\cap\cD_K(\bolda_j)=\emptyset$ for any $i\ne j$, $i,j\in[M]$.
Then, we have that~$\{\bolda_i\}^M_{i=1}\in\cS^D$. In addition, 
similar to Lemma \ref{lemma:ai}, we can use Lemma \ref{lemma:qequals1deletion} to show that the output $\{\bolda_i\}^M_{i=1}$ satisfies Eq. \eqref{equation:aideletion}.

Therefore, we have the following decoding algorithm, similar to the one in Section \ref{section:computingfh}.

\textbf{Decoding:} 
\begin{itemize}
	\item [\textbf{(1)}] Order the strings~$\{\bolda_{i}\}^{M}_{i=1}$ such that~$\bolda_1>\bolda_2>\ldots>\bolda_M$.
	\item [\textbf{(2)}]
	For~$i\in[M]$,
	\begin{align}\label{equation:decodingqi}
	q_i=\decimal(\bolda_i)+1+\sum_{\ell:a_{i,\ell}=1\mbox{ and }\ell\in[L']}N_D((a_{i,1},\ldots,a_{i,\ell-1},0),\{\bolda_j\}^{i-1}_{j=1}).
	\end{align}
\end{itemize}
Finally, the correctness of decoding is guaranteed by \eqref{equation:aideletion} and the fact that $\bolda_1>\bolda_2>\ldots>\bolda_M$, where $\bolda_i$ is the output generated in the $i$-th outer-loop. The latter follows similar proof to the one in Section \ref{section:computingfh}. Suppose there exists $i_1>i_2$ such that $\bolda_{i_1}>\bolda_{i_2}$. Then we have that  
\begin{align}\label{equation:qi2qi1deletion}
	q_{i_2}-q_{i_1}<& \sum_{\ell:a_{i_1,\ell}=1\mbox{ and }\ell\in[\ell^*]}(2^{L'-\ell}- N_D((a_{i_1,1},\ldots,a_{i_1,\ell-1},0),\{\bolda_j\}^{i_2-1}_{j=1}))\nonumber\\
	&-\sum_{\ell:a_{i_1,\ell}=1\mbox{ and }\ell\in[\ell^*]}(2^{L'-\ell}- N_D((a_{i_1,1},\ldots,a_{i_1,\ell-1},0),\{\bolda_j\}^{i_1-1}_{j=1}))\nonumber\\
	=&\sum_{\ell:a_{i_1,\ell}=1\mbox{ and }\ell\in[\ell^*]}(N_D((a_{i_1,1},\ldots,a_{i_1,\ell-1},0),\{\bolda_j\}^{i_1-1}_{j=1})-N_D((a_{i_1,1},\ldots,a_{i_1,\ell-1},0),\{\bolda_j\}^{i_2-1}_{j=1}))\nonumber\\
	=&\sum_{\ell:a_{i_1,\ell}=1\mbox{ and }\ell\in[\ell^*]}N_D((a_{i_1,1},\ldots,a_{i_1,\ell-1},0),\{\bolda_j\}^{i_1-1}_{j=i_2})\nonumber\\
	\overset{(a)}{\le}&N_D(\emptyset,\{\bolda_j\}^{i_1-1}_{j=i_2})\nonumber\\
	\overset{(b)}{\le}&(i_1-i_2)P,
\end{align}
where $\emptyset$ is the empty sequence and $(a)$ follows from the definition of $N_D(\bolda,A)$ and the fact that the strings which have $(a_{i_1,1},\ldots,a_{i_1,\ell_1-1},0)$ and~$(a_{i_1,1},\ldots,a_{i_1,\ell_2-1},0)$ as prefixes, respectively, are different. Inequality $(b)$ follows from \eqref{equation:emptyND} and the fact that $N_D(\bolda,A)=\sum_{\boldc\in A}N_D(\bolda,\boldc)$.
 
 Similar to the procedure for computing $(F^H_S)^{-1}$ in Section \ref{section:computingfh}, we can compute $(F^D_S)^{-1}$ by using the decoding procedure above and obtain the set of integers $\{q_1,\ldots,q_M\}$, and then recover $d$ from $\{q_1,\ldots,q_M\}$ by following similar steps in Lemma \ref{lemma:dtoq}.
\section{Conclusions and Future Work}\label{section:FutureWork}
	Motivated by DNA storage applications, this paper studied coding for channels where data are encoded as a set of~$M$ unordered strings of length~$L$. A~$K$-substitution correcting code and a $K$-deletion correcting code were presented for this channel. Our codes achieve~$O(K\log ML)$ redundancy for constant~$K$, which are order-wise optimal. Our $K$-deletion correcting code can be slightly modified to correct a combination of at most $K$ deletions and insertions. It is interesting to find optimal codes that correct substitution or deletion/insertion errors for larger range of parameters~$K,M$, and~$L$.


\bibliographystyle{IEEEtran}

\appendices


\section{Proof of~Eq. \eqref{equation:inequality2}}\label{section:proofofinequality}
\begin{align*}
    r(\mathcal{C})
    \le&\log \binom{2^L}{M}-\log \lceil\frac{\prod^{M-1}_{i=1}(2^{L'}-iQ)}{(M-1)!}\rceil\\
    &- [M(L-L')-4KL'-2K\lceil\log ML\rceil]\\
    \le &\log \frac{2^{LM}}{M!} - \log \frac{(2^{L'}-MQ)^{M-1}}{(M-1)!}\\
    &- [M(L-L')-4KL'-2K(\log ML+1)]\\
    = &ML'- \log (2^{L'}-MQ)^{M-1} +4KL'\\
    &+2K\log ML + 2K-\log M\\
    =& \log \frac{2^{L'(M-1)}}{(2^{L'}-MQ)^{M-1}} + L'+4KL'\\
    &+2K\log ML + 2K-\log M\\
    =& (M-1)\log (1+\frac{MQ}{2^{L'}-MQ})+ L'+4KL'\\
    &+2K\log ML + 2K-\log M\\
    \overset{(a)}{\le} &(M-1)\log (1+\frac{1}{M})+ L'+4KL'\\
    &+2K\log ML + 2K-\log M\\
    \le & \log e + L'+4KL'+2K\log ML + 1+2K-\log M\\
    =&2K\log ML + (12K+2)\log M+O(K^3)+O(K\log\log ML)
\end{align*}
where~$(a)$ is equivalent to
\begin{align*}
    MQ(M+1)\le 2^{L'},
\end{align*}
which can be 
obtained from the following inequality
\begin{align}\label{equation:inequality22}
    M^2(3\log M +4K^2+2)^{2K}\le 2^{3\log M+4K^2+1}.
\end{align}
Eq.~\eqref{equation:inequality22} is proved as follows. 
Rewrite Eq.~\eqref{equation:inequality22} as
\begin{align}\label{equation:rewrite}
    (3\log M +4K^2+2)^{2K}\le 2^{\log M+4K^2+1}.
\end{align}
Define functions~$g(y,K)=\ln (3y+4K^2+2)^{2K}$ and~$h(y,K)=\ln 2^{y+4K^2+1}$. Then we have that
\begin{align*}
    \partial h(y,K)/\partial y
    -\partial g(y,K)/\partial y
    =\ln 2 - 6K/(3y+4K^2+2),
\end{align*}
which is positive for~$y\ge 1$ and~$K\ge 2$. Therefore, for~$K\ge 2$ and~$y\ge 1$, we have that
\begin{align*}
    h(y,K)-g(y,K)\ge h(1,K)-g(1,K).
\end{align*}
Furthermore,
\begin{align*}
    \partial h(1,K)/\partial K
    -\partial g(1,K)/\partial K
    =&(8\ln 2) K -2\ln (4K^2+5)-   16K^2/(4K^2+5)\\
    >&(8\ln 2) K-2\ln (5K^2)-4\\
    =&4(K-1-\ln K)+(8\ln 2-4) K-2\ln 5\\
    \overset{(a)}{\ge} &(8\ln 2-4) K-2\ln 5,
\end{align*}
where~$(a)$ follows since~$K=e^{\ln K}\ge 1+\ln K$. Since~$(8\ln 2-4) K-2\ln 5$ is positive for~$K\ge 3$, we have that~$h(1,K)/\partial K
    >\partial g(1,K)/\partial K$ for~$K\ge 3$.
It then follows that~$h(1,K)-g(1,K)\ge\min\{h(1,2)-g(1,2),h(1,3)-g(1,3)\}>0$ for~$K\ge 2$. Hence~$h(y,K)>g(y,K)$ for~$y\ge 1$ and~$K\ge 2$, which implies that Eq.~\eqref{equation:rewrite} holds when~$M\ge 2$ and~$K\ge 2$. 


When~$K=1$ we have that
\begin{align*}
    2^{\log M+4K^2+1}
    =&32(1+\sum^\infty_{i=1}\log^i M/i!)\\
    \ge& 32(1+\log M+\log^2M/2)\\
    \ge &(4\log M+5)^2\\
    =&(3\log M+4K^2+2)^{2K}.
\end{align*}
Hence, Eq.~\eqref{equation:rewrite} and Eq.~\eqref{equation:inequality22} holds. We now finish the proof of Eq.~\eqref{equation:inequality2}.

\end{document}